\newtheorem{definition}{Definition}
\newtheorem{lemma}{Lemma}
\newcommand{\probms}{\textsc{Makespan}}
\newcommand{\probsc}{\textsc{Santa Claus}}
\newcommand{\probsp}{\textsc{Set Packing}}
\newcommand{\probcsp}{\textsc{CSP}}
\title{Scheduling Problems with Constrained Rejections}
\author{Sami Davies\thanks{Email: {\tt samidavies@berkeley.edu}. Department of EECS, UC Berkeley, and RelationalAI.}
 \and 
Venkatesan Guruswami\thanks{Email: {\tt venkatg@berkeley.edu}. Simons Institute for the Theory of Computing, and Departments of EECS and Mathematics, UC Berkeley. Research supported in part by NSF grants CCF-2228287 and CCF-2211972 and a Simons Investigator award.}
\and
Xuandi Ren\thanks{Email: {\tt xuandi\_ren@berkeley.edu}. Department of EECS, UC Berkeley. Research supported in part by NSF CCF-2228287.}
}
\begin{document}
\maketitle

\abstract{
We study bicriteria versions of Makespan Minimization on Unrelated Machines and Santa Claus by allowing a constrained number of rejections.  Given an instance of Makespan Minimization on Unrelated Machines where the optimal makespan for scheduling $n$ jobs on $m$ unrelated machines is $T$, (Feige and Vondrák, 2006) gave an algorithm that schedules a $(1-1/e+10^{-180})$ fraction of jobs in time $T$. We show the ratio can be improved to $0.6533>1-1/e+0.02$ if we allow makespan $3T/2$. To the best our knowledge, this is the first result examining the tradeoff between makespan and the fraction of scheduled jobs when the makespan is not $T$ or $2T$.

For the Santa Claus problem (the ``Max-Min'' version of Makespan Minimization), the analogous bicriteria objective was studied by (Golovin, 2005), who gave an algorithm providing an allocation so a $(1-1/k)$ fraction of agents receive value at least $T/k$, for any $k \in \mathbb{Z}^+$ and $T$ being the optimal minimum value every agent can receive. We provide the first hardness result by showing there are constants $\delta,\varepsilon>0$ such that it is NP-hard to find an allocation where a $(1-\delta)$ fraction of agents receive value at least $(1-\varepsilon) T$. To prove this hardness result, we introduce a bicriteria version of Set Packing, which may be of independent interest, and prove some algorithmic and hardness results for it. Overall, we believe these bicriteria scheduling problems warrant further study as they provide an interesting lens to understand how robust the difficulty of the original optimization goal might be.}

\section{Introduction}
Two of the most central problems in scheduling theory are Makespan Minimization on Unrelated Machines and its dual, the Santa Claus problem (also called Max-Min Fair Allocation). We refer to the two problems as \probms{} and \probsc{}, in short. 
In \probms{}, a set of jobs $J$ are available to schedule on a set of machines $M$, where the processing time of job $j \in J$ on machine $i \in M$ is $p_{i,j}$. We let $p_{i,j} = \infty$ if job $j$ cannot be processed on machine $i$. The jobs must be assigned non-preemptively, and the goal is to minimize the makespan, i.e., the time the last machine finishes processing.
In \probsc{}, a set of items $I$ are available to assign to a set of agents $A$, where the value agent $i \in A$ receives from item $j \in I$ is $p_{i,j} \geq 0$. Each item can be allocated to at most one agent, and the total value agent $i$ receives is the sum of the values from individual items. The goal is to find an allocation of items to agents so that the minimum total value of any agent is maximized. 

 For \probms{}, the seminal work of Lenstra, Shmoys, and Tardos gave a factor 2 approximation algorithm and
showed that 
the problem is NP-hard to approximate  within a factor better than 3/2 \cite{DBLP:journals/mp/LenstraST90} (essentially it is NP-hard to tell if the makespan is $3$ or $2$ for an instance with nonnegative, integral processing times). Both the hardness and approximation results are still the best known for the general settings.
It is NP-hard to approximate \probsc{} within a factor better than 2 \cite{bezakova2005allocating} (similarly to \probms{}, it is NP-hard to tell if the value is $2$ or $1$ for an instance with nonnegative, integral values). 
The best approximation algorithm for \probsc{} has approximation factor 
$\tilde{O}(m^{\varepsilon})$
and running time $m^{O(1/\varepsilon)}$, for $\varepsilon = \Omega( \log \log m / \log m)$  
\cite{DBLP:conf/focs/ChakrabartyCK09}, where $m$ is the number of items.

For both problems in their most general versions, progress on the hardness and algorithmic fronts has been stuck.  On the hardness side, e.g., for \probsc{}, all natural approaches via local gadgets break down when one can allocate half the intended value to agents. Specifically, \cite{bateni2009maxmin} showed the local gadgets used in all previous reductions can be modified to have \textit{degree two}, i.e., each item is shared by at most two agents. However, \cite{DBLP:conf/focs/ChakrabartyCK09} gave a $(2+\varepsilon)$-approximation algorithm for this special case, ruling out the hope to get better hardness factor using only such gadgets. 
On the algorithmic front, the lack of progress is largely due to the absence of good convex programs for either problem. For instance, the \emph{configuration linear program} has an integrality gap of 2 for \probms{} \cite{verschae2014configuration} and $\Omega(\sqrt{n})$ for \probsc{}, for $n$ the number of agents\footnote{Additionally, there is evidence (but not proof) that even systematic strengthenings of LPs---specifically the Sherali-Adams hierarchy---for \probsc{} may not be strong enough to obtain constant factor approximations in polynomial time \cite{bamas2024lift}}. 

Given the difficulty in improving algorithms or hardness in the general setting, much work over the past several decades has focused on breaking these barriers in special cases. 
The notable cases for \probsc{} include 
the restricted setting  \cite{feige2008allocations, annamalai2017combinatorial, cheng2018restricted, cheng2019restricted,  davies2020tale, haxell2023improved}, unrelated graph balancing \cite{bateni2009maxmin, DBLP:conf/focs/ChakrabartyCK09, verschae2014configuration}, and Max-Min Degree Arborescence \cite{bateni2009maxmin, bamas2023better, bamas2024lift}. For \probms{}, some of the important cases that have been studied include the restricted setting \cite{svensson2011santa, huang2015combinatorial, jansen2017configuration}, graph balancing \cite{EbenlendrKS14, verschae2014configuration,  DBLP:conf/icalp/JansenR19}, and in the fixed-parameter tractable regime \cite{mnich2015scheduling}.
These special settings are valuable because they help the community identify which aspects of the problems really contribute to their difficulty.
For example, one such insight was suggested by Verschae and Wiese \cite{verschae2014configuration} for \probms{}: the configuration linear program has integrality gap $1.833<2$ in the restricted setting, but has gap 2 for unrelated graph balancing, thus hinting that perhaps the difficulty in the problem is present in unrelated graph balancing but not in the restricted setting\footnote{In unrelated graph balancing, every job $j$ has at most two machines where $p_{i,j} < \infty$ but these two processing times may be different, while in the restricted setting, $p_{i,j} \in \{p_j,\infty\}$.}. On the other hand, the unrelated graph balancing setting is much easier to handle for \probsc{}, as there exist several constant factor approximations \cite{bateni2009maxmin, DBLP:conf/focs/ChakrabartyCK09}, indicating that perhaps the difficulties for both problems are not even the same. 

In this paper, we study a different type of relaxation for these problems.
For \probms{} we allow a $1-\alpha$  fraction of jobs to be rejected (i.e., not scheduled on any machine), for \probsc{} we allow a $1-\alpha$ fraction of agents to be rejected (i.e., receive 0 value). 
Then we study how this affects the approximation factor $\beta$ for makespan/value guarantee. 
While a standard approximation algorithm fixes $\alpha=1$ and optimizes $\beta$, we study the trade-off between the two parameters.

This relaxation of \probms{} is the unweighted case of the Maximum General Assignment Problem \cite{FGMS06}, for which Feige and Vondrák \cite{FV06} gave an algorithm that achieved $\alpha=1-1/e+10^{-180}$ and $\beta =1$. On the other hand, Nutov, Beniaminy and Yuster \cite{NBY06} showed the NP-hardness of $\alpha=1-\varepsilon$ and $\beta=1$ for some $\varepsilon>0$. We remark that \cite{FV06}'s improvement over $1-1/e$ relies on a delicate hierarchy of parameters that has to be carefully balanced, which makes it difficult to turn the $10^{-180}$ into a larger constant. In this work, we give an algorithm that reaches $\alpha> 1-1/e+0.02$ and $\beta=3/2$.

\begin{restatable}{theorem}{algms}
\label{thm:bicrit-MS}
For the \probms{} problem, there is a polynomial-time randomized algorithm, which given as input an instance where the optimal makespan for scheduling all $n$ jobs on the $m$ machines is $T$, schedules in expectation $\frac{6e-5}{6e+1}\cdot n>0.6533 \cdot n >(1-1/e+0.02)\cdot n$ many jobs within makespan $\frac{3}{2}\cdot T$.
\end{restatable}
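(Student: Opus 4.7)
The plan is to start from the natural fractional assignment LP---variables $x_{i,j}\in[0,1]$ satisfying $\sum_i x_{i,j}\le 1$ for every job $j$ and $\sum_j p_{i,j} x_{i,j}\le T$ for every machine $i$---which is feasible since the optimum integral makespan is $T$, and then round it into a schedule that respects the relaxed makespan $3T/2$ while serving in expectation at least $\tfrac{6e-5}{6e+1}\cdot n$ of the jobs.

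The threshold $T/2$ gives the natural case split on each pair $(i,j)$ with $x_{i,j}>0$: call it \emph{big} if $p_{i,j}>T/2$ and \emph{small} otherwise. Two structural observations drive the improvement. First, on any machine at most two fractional big edges appear in the LP (each contributes more than $T/2$ to the $T$ cap), so the bipartite ``big'' subgraph has capacity-2 structure and admits a tight matching-based integral rounding with marginals matching the LP. Second, with makespan budget $3T/2$ instead of $T$, random-rounding-with-greedy-acceptance on a machine that only sees small jobs gives strictly more than $1-1/e$ probability of successful placement per job, since each overflow event now requires load surpassing the LP mean by $T/2$ rather than being merely equal to it.

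The algorithm is a convex combination of two procedures. Procedure~A independently rounds each job to machine $i$ with probability $x_{i,j}$ and accepts it in a random processing order whenever the current load stays $\le 3T/2$. Procedure~B uses an extreme-point argument on the big-edge bipartite subgraph to commit at most one big job per machine integrally, with marginals equal to each job's total big LP mass; the remaining small edges are then independently rounded on top, again greedy-accepted subject to $3T/2$. Mixing A and B with a carefully chosen probability $q$ is designed so that, per-job, both a big job and a small job are scheduled with probability at least $\tfrac{6e-5}{6e+1}$; by linearity this yields the claimed expected count. The ratio $\tfrac{6e-5}{6e+1}=1-\tfrac{6}{6e+1}$, which exceeds $1-1/e$ by $\tfrac{1}{e(6e+1)}>0.02$, will fall out naturally from balancing the two per-job guarantees.

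The main technical obstacle will be quantifying the per-job improvement in Procedure~A on small jobs. The $1-1/e$ bound for standard rounding is tight against $T$-budget pathological instances, so one must show that replacing $T$ by $3T/2$ shifts the worst case by a controlled additive constant. I expect this to come from a single-machine analysis: a Chernoff- or Markov-style bound on the overflow event exploiting the $T/2$ slack and the fact that every small contribution is bounded by $T/2$, or a Feige--Vondrák-style exchange argument that becomes considerably simpler in the presence of the larger budget. Once that strengthened per-job probability is in hand, the extreme-point rounding for Procedure~B is routine (its capacity-2 bipartite graph is a union of paths and even cycles in the support, where alternating rounding preserves marginals), and tuning the mixing parameter $q$ to equate the big-job guarantee of Procedure~B with the small-job guarantee of Procedure~A is a one-variable optimization that yields exactly $\tfrac{6e-5}{6e+1}$.
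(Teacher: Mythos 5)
Your proposal takes a genuinely different route from the paper, but it has several gaps that I do not believe can be repaired as sketched.

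First, the structural observation driving Procedure~B is incorrect. In the assignment LP with constraint $\sum_j p_{i,j} x_{i,j}\le T$, a big edge ($p_{i,j}>T/2$) only implies $\sum_{j:\text{big}} x_{i,j} < 2$; it does \emph{not} bound the number of fractional big edges at a machine by two (take, say, five big edges at a machine with $p_{i,j}=0.6T$ and $x_{i,j}=0.3$ each). More importantly, the LP does not impose $\sum_{j:\text{big}} x_{i,j}\le 1$, so there is no bipartite $b$-matching structure on which an extreme-point rounding would ``commit at most one big job per machine with marginals equal to each job's total big LP mass.'' The paper sidesteps this entirely by never working with fractional big-edge mass: it observes that any \emph{integral} makespan-$T$ schedule uses at most one big edge per machine, so the big edges used by OPT form a matching in $G_L$, and hence a maximum matching $m^\star$ on $G_L$ upper-bounds the number of jobs OPT schedules via big edges. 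This yields the clean structural fact (Lemma~\ref{lem:lb_using_only_es}) that at least $n-m^\star$ jobs are schedulable in makespan $T$ using \emph{only} small edges — a statement about integral solutions, not LP extreme points.

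Second, your Procedure~A relies on a per-job overflow analysis that you flag as the main obstacle, and I think it genuinely fails to close. The tight $1-1/e$ bound for LP/CLP rounding does not come from machine overflow at all: a job is scheduled iff it is sampled by some machine, and the probability that no machine samples it is $\prod_i(1-\sum_{C\ni j}y_{i,C})\le e^{-1}$. Enlarging the makespan budget to $3T/2$ does not change this union-of-sampling-events calculation. Your plan switches to independent per-job rounding with greedy acceptance, which is a contention-resolution analysis — a different and messier argument where the per-job acceptance probability does not decompose cleanly, and where there is no obvious route to the exact constant $\tfrac{6e-5}{6e+1}$. The paper instead keeps the sharp $1-1/e$ guarantee intact and gains the extra slack elsewhere: it spends only $T/2$ of the budget on a deterministic greedy (list-scheduling) pass that places $\tfrac{1}{6}(n-m^\star)$ jobs using small edges alone, then runs the usual $(1-1/e)$ CLP rounding on the remaining jobs in the leftover budget $T$. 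The final bound $\max\bigl(m^\star,\ \tfrac16(n-m^\star)+(1-\tfrac1e)\bigl(n-\tfrac16(n-m^\star)\bigr)\bigr)$ is minimized exactly at $m^\star=\tfrac{6e-5}{6e+1}n$, which is where the specific constant comes from — not from balancing two randomized procedures via a mixing parameter $q$.

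In short: your threshold $T/2$ and the idea of treating big and small edges separately match the paper's high-level decomposition, but the matching should be on the \emph{graph} $G_L$ (exploiting the integral structure of OPT), not an extreme-point rounding of a nonexistent fractional matching constraint; and the $1-1/e$ improvement should come from \emph{composing} a deterministic $T/2$-budget pre-scheduling pass with standard CLP rounding, not from strengthening the randomized-rounding analysis itself.
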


We highlight that the optimal value $T$ we compare against is the optimal makespan for scheduling \emph{all} jobs, not the (potentially smaller) optimal makespan of scheduling a subset. We also remark that scheduling all jobs in makespan less than $3T/2$ is NP-hard \cite{DBLP:journals/mp/LenstraST90}, so when initiating the study of the trade-off between $\alpha$ and $1<\beta<2$, it is natural to consider $\beta=3/2$.

Allowing some jobs to be rejected is not only a natural variant, but we believe it still captures much of the difficulty in the original problems. We conjecture there are some ``transition points'' on the curve between $\alpha$ and $\beta$, where when one wishes to schedule a slightly larger fraction of jobs (say from $\alpha$ to $\alpha+\delta$, for a small constant $\delta>0$), a large jump in the approximation factor for the makespan, $\beta$, is needed.  
One regime where it seems difficult to smooth the curve is when $\beta=1$; it is not clear whether increasing the approximation factor on the makespan slightly allows us to schedule a larger fraction of jobs.
Similarly, points near $(\alpha,\beta)=(1,2)$ are interesting because it seems challenging to design an algorithm that rejects a small constant $\varepsilon$ (say $\varepsilon < 1/10$) fraction of jobs and schedules the rest in makespan $(2-\varepsilon)\cdot T$.

 \probsc{} with constrained rejections has been studied algorithmically by Golovin \cite{golovin2005max}, who gave an algorithm which for any $k \in \mathbb{Z}^+$, finds an allocation for \probsc{} where at least a $(1-1/k)$ fraction of agents obtain value at least $1/k\cdot T$. In this work, we show the hardness of this problem, although in the high-value regime.

\begin{restatable}{theorem}{hardscc}
\label{thm:hardness_sc_2}
    For the \probsc{} problem, there are universal constants $\delta,\varepsilon>0$, such that given the existence of an assignment where all $n$ agents receive total value at least $T$, it's NP-hard to find an assignment where $(1-\delta)\cdot n$ agents receive total value $(1-\varepsilon)\cdot T$.
\end{restatable}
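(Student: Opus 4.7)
The plan is to prove this hardness in two stages: first, introduce a bicriteria variant of Set Packing and establish its NP-hardness, and then give a clean reduction from that variant to bicriteria \probsc{}. The intended bicriteria \probsp{} problem takes as input a family $\mathcal{S}$ of $k$-element subsets of a universe $U$. A YES instance admits a perfect packing (pairwise disjoint sets from $\mathcal{S}$ covering $U$), whereas a NO instance is robustly unpackable: for every sub-family $\mathcal{S}' \subseteq \mathcal{S}$ of size $(1-\delta_0)|\mathcal{S}|$ and every assignment of pairwise disjoint subsets $T_S \subseteq S$ to the sets $S \in \mathcal{S}'$, some set has $|T_S| < (1-\varepsilon_0)k$. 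So neither rejecting a few sets nor shaving a few elements per set is enough to rescue a packing.

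The reduction from bicriteria \probsp{} to bicriteria \probsc{} is natural. Create one agent $a_S$ per set $S \in \mathcal{S}$ and one item per element $e \in U$, and set the value $p_{a_S, e} = 1$ if $e \in S$ and $0$ otherwise; the target value is $T = k$. In the YES case the perfect packing yields an allocation where every agent receives value exactly $k$. In the NO case, any allocation in which a $(1-\delta)$ fraction of agents each collect value at least $(1-\varepsilon)k$ would, restricted to items each agent actually values, produce a bicriteria packing $T_S \subseteq S$ with $|T_S| \ge (1-\varepsilon)k$ for at least $(1-\delta)|\mathcal{S}|$ sets, contradicting the robust NO condition as soon as $\delta \le \delta_0$ and $\varepsilon < \varepsilon_0$. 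The constants $\delta, \varepsilon$ in the theorem are then inherited from $\delta_0, \varepsilon_0$.

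The main obstacle is proving hardness of bicriteria \probsp{} itself. I would reduce from a gap constraint satisfaction problem with strong soundness---a Label Cover style problem (possibly with smoothness or parallel repetition) or a gap version of $k$-SAT---encoding each variable as a block of elements and each constraint as a family of sets representing its satisfying local assignments, so that YES instances of the CSP produce perfect packings. The delicate part is to make the NO side \emph{doubly} robust: it must withstand both (a)~discarding a $\delta_0$ fraction of the sets, so gadgets cannot concentrate all unsatisfiability into a small sacrificial subfamily; and (b)~losing an $\varepsilon_0$ fraction of elements per retained set, so that any near-packing can still be decoded into a near-satisfying assignment. Achieving (a) pushes one toward gadgets with expansion-like or balanced combinatorial structure, while (b) requires that each individual constraint gadget be ``rigid,'' in the sense that dropping a few elements still witnesses a proper local assignment. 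Combining the two simultaneously, without blowing up the instance or collapsing the gap, is where the real work lies, and is what forces $\delta$ and $\varepsilon$ to be specific small universal constants.
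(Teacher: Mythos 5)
Your high-level plan (reduce bicriteria Set Packing to bicriteria \probsc{}, then prove bicriteria Set Packing hardness from a gap CSP via variable and constraint gadgets) is the same skeleton the paper uses. But there is a genuine bug in your formulation of the intermediate problem that propagates into your reduction and makes it fail.

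In your bicriteria \probsp{}, you say a YES instance ``admits a perfect packing'' and a NO instance cannot give disjoint $(1-\varepsilon_0)$-subsets to \emph{every} subfamily of size $(1-\delta_0)|\mathcal S|$. There are two readings, both broken. If you mean the perfect packing uses \emph{all} of $\mathcal S$, then YES vs.\ NO is decidable in polynomial time (just check pairwise disjointness), so the problem cannot be NP-hard. If you mean the perfect packing is a proper subfamily (say of size $|U|/k \ll |\mathcal S|$), then your NO condition about $(1-\delta_0)|\mathcal S|$ sets is vacuous for both YES and NO instances once $|\mathcal S| > |U|/k$: no collection of pairwise disjoint $(1-\varepsilon_0)k$-subsets of $U$ can ever have more than $|U|/((1-\varepsilon_0)k)$ members, so the NO requirement is automatically met regardless of satisfiability. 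You need an asymmetric formulation with two separate thresholds $\alpha > \beta$: the YES case promises $\alpha n$ sets partitioning $U$, and the NO case forbids finding even $\beta n$ almost-disjoint sets. This is exactly what the paper's \Cref{thm:hardness_sp_2} states, and it is needed because the CSP-to-\probsp{} reduction necessarily produces one set per (constraint, satisfying local assignment) pair, so $|\mathcal S| = \eta m$ is strictly larger than the $m$ sets that form the packing.

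This mis-formulation breaks your \probsc{} reduction as well. With $\alpha < 1$, your construction (one agent per set, value $1$ per contained element) cannot give all $n$ agents value $T$ in the YES case: the perfect packing only covers the agents corresponding to the $\alpha n$ packed sets, and once their items are allocated, the remaining $(1-\alpha)n$ agents have nothing left to take. Completeness fails. The paper's fix is to add $(1-\alpha)n$ \emph{dummy items}, each worth the full target $T$ to every agent; in the YES case they mop up the $(1-\alpha)n$ unpacked agents, and in soundness they can rescue at most $(1-\alpha)n$ agents, so if $(1-\delta)n = (1-\alpha+\beta)n$ agents are happy, at least $\beta n$ of them must be happy from normal items alone, recovering the \probsp{} NO witness. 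Choosing $\delta = \alpha - \beta$ closes the argument. Without the dummy items and the $\alpha/\beta$ asymmetry, the reduction does not go through.
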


To prove \Cref{thm:hardness_sc_2}, we introduce  bicriteria \probsp{} as an intermediate problem (see \Cref{def: SP}). In the traditional \probsp{} problem, we are given as input a collection $\mathcal{S}$ of sets over a universe $U$ of elements, and the goal is to find the maximum number of disjoint sets from $\mathcal{S}$. 
In bicriteria \probsp{}, instead of being forced to choose a whole set from $\mathcal S$ as in the original formulation, we relax the problem to finding large disjoint subsets of the sets in $\mathcal S$.

While we motivated bicriteria \probms{} and \probsc{} from the long-standing gaps between hardness and approximations, bicriteria \probsp{} is itself an interesting combinatorial optimization problem due to the tight hardness results for \probsp{} (see \Cref{sec: rel-work}). In other words, we know \probsp{} is a hard problem, but our understanding of how rigid the hardness is to relaxing some of the constraints is limited. In this vein, we obtain the following algorithmic and hardness results.
\begin{restatable}{theorem}{algsp}
\label{thm:alg_sp}
    For the \probsp{} problem, for any $\delta>0$, there exists $\varepsilon>0$ and a polynomial-time randomized algorithm, which given as input a collection $\mathcal S$ of $n$ sets over a universe $U$ and the existence of $m$ sets in $\mathcal S$ that form a partition of $U$, finds $m'=(1-\delta)\cdot m$ many sets $S_1,\ldots,S_{m'}$ together with their subsets $A_1 \subseteq S_1,\ldots,A_{m'} \subseteq S_{m'}$, such that for every $i \in [m']$, $|A_i| \ge \varepsilon\cdot |S_i|$, and the sets $A_1,\ldots,A_{m'}$ are disjoint.
\end{restatable}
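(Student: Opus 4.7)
The plan is to use a natural LP relaxation followed by randomized rounding, with separate combinatorial handling for sets too small for concentration inequalities to help. I would introduce variables $z_S\in[0,1]$ for every $S\in\mathcal{S}$ (``set $S$ is selected'') and $x_{e,S}\in[0,1]$ for every pair with $e\in S$ (``element $e$ is fractionally assigned to $S$''), and solve the LP that maximizes $\sum_S z_S$ subject to $\sum_{S:e\in S} x_{e,S}\le 1$ for every $e\in U$, $\sum_{e\in S} x_{e,S}\ge \varepsilon|S|z_S$ for every $S\in\mathcal{S}$, and $0\le x_{e,S}\le z_S\le 1$. The promised partition $S_1^*,\ldots,S_m^*$ certifies that the LP optimum is at least $m$ via the feasible integral point $z_{S_j^*}=1$, $x_{e,S_j^*}=\mathbf{1}[e\in S_j^*]$.

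Next, I would round by assigning each element $e$ independently to a set $S\ni e$ with probability $x_{e,S}$ (and leaving $e$ unassigned with the remaining probability), letting $A_S$ denote the elements assigned to $S$; the $A_S$'s are automatically pairwise disjoint, and $\mathbb{E}[|A_S|]\ge \varepsilon|S|z_S$. The algorithm outputs all $S$ with $|A_S|\ge \varepsilon|S|$, and I would split the analysis by set size: for $|S|\ge K(\varepsilon,\delta)$ with $z_S$ bounded away from zero, standard Chernoff concentration ensures $|A_S|\ge \varepsilon|S|$ with probability at least $1-O(\delta)$, retaining a $(1-O(\delta))$-fraction of large partition sets in expectation. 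For sets of size $|S|<K$, the integer condition $|A_S|\ge \varepsilon|S|$ collapses to $|A_S|\ge 1$, which I would handle by a bipartite $b$-matching subroutine on the residual graph, using the partition itself as a feasibility witness so that enough of the small partition sets can be captured.

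The main obstacle I anticipate is that the LP can spread its $z$-mass across many non-partition sets, so partition sets may receive $z$-values too small for the rounding to fire reliably---this is especially severe for small sets where concentration is weak and a singleton partition set may be ``caught'' only with probability on the order of $\varepsilon$. To push $(1-\delta)m$ sets through, $\varepsilon$ must be chosen as a sufficiently small polynomial in $\delta$, and the analysis must carefully balance the elements consumed by the randomized-rounding stage against those reserved for the small-set matching stage, likely via an iterative or two-phase rounding scheme that first secures the small partition sets via matching and then handles the larger sets on the residual instance.
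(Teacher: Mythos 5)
Your high-level structure is the same as the paper's: handle small sets via a flow/matching subroutine and large sets via LP rounding, combined in two phases. The paper formalizes the "residual" argument via a simple lemma showing that after using up $|U'|$ elements on small sets, $\mathrm{OPT}$ on $U\setminus U'$ drops by at most $|U'|$, which is essentially the $b$-matching idea you gesture at. So the decomposition and the small-set handling are essentially right.

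The gap is in the LP you set up for the large-set phase. You bake $\varepsilon$ into the LP via the constraint $\sum_{e\in S} x_{e,S}\ge \varepsilon|S|z_S$ and maximize $\sum_S z_S$, then round by assigning each $e$ to $S$ with probability $x_{e,S}$. The problem is that the objective $\max\sum_S z_S$ does not reward large $x_{e,S}$, and the constraint only forces $\mathbb{E}[|A_S|]=\sum_e x_{e,S}\ge \varepsilon|S|z_S$, which is useless when $z_S$ is small. The LP can (and, at optimality, typically will) spread $z$-mass thinly. A concrete bad instance: take the $m$ partition sets, each of size $L>C$, and add $K$ identical copies of each. Your LP has an optimal solution with $z_S=1/(K\varepsilon)$ and $x_{e,S}=1/K$ for every copy, giving objective value $m/\varepsilon>m$, so the LP strictly prefers it to the "honest" solution. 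After rounding, each set receives $\mathbb{E}[|A_S|]=L/K$ elements, which goes to $0$ as $K$ grows, and the probability that any set clears a threshold of $\varepsilon'L\ge 2$ also goes to $0$; the total expected number of qualifying sets tends to $0$, not $(1-\delta)m$. The issue is not fixable just by shrinking the output threshold, because $K$ (and hence how thinly the LP spreads) is instance-dependent while your $\varepsilon$ must be fixed in advance.

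The paper sidesteps this by using the \emph{plain} fractional set-packing LP, $\max \sum_S x_S$ subject to $\sum_{S\ni u}x_S\le 1$, with no $\varepsilon$ in the formulation. It then rounds in a two-step way: sample each $S$ independently with probability $x_S$, declare an element "popular" if it lands in $\ge D$ sampled sets (this is rare since its fractional load is $\le 1$), discard popular elements, and finally send each remaining element to a uniformly random sampled set containing it. The key point is that the analysis of "does a sampled $S$ survive to the end?" is done \emph{conditioned on $S$ being sampled}, and that conditional probability is at least $(1-\delta/2)^2$ uniformly over all sets, independent of how small $x_S$ is. Hence $\mathbb{E}[|\mathcal{C}|]\ge (1-\delta)\sum_S x_S\ge(1-\delta)m'$. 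In the bad instance above, the paper's LP gives $x_S=1/K$, $\mathbb{E}[|\mathcal{A}|]=m$, and the conditional survival probability is still $\ge 1-\delta$, so $(1-\delta)m$ sets are output. Your rounding, which ties element-assignment probabilities directly to LP variables rather than sampling sets first, loses exactly this robustness.
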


\begin{restatable}{theorem}{hardsp}
\label{thm:hardness_sp_2}
    For the \probsp{} problem, there are universal constants $\varepsilon>0, \alpha>\beta>0$, such that given as input a collection $\mathcal S$ of $n$ sets over a universe $U$ and the existence of $m=\alpha \cdot n$ sets in $\mathcal S$ that form a partition of $U$, it is NP-hard to find $m'=\beta \cdot n$ many sets $S_1,\ldots,S_{m'}$ in $\mathcal S$ together with their subsets $A_1\subseteq S_1,\ldots,A_{m'} \subseteq S_{m'}$, such that for every $i \in [m'], |A_i| \ge (1-\varepsilon) \cdot |S_i|$, and the sets $A_1,\ldots,A_{m'}$ are disjoint.
\end{restatable}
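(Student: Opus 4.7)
The plan is to reduce from a gap version of $3$-Dimensional Matching (3DM) with perfect completeness. Specifically, I will invoke the following standard result: there is an absolute constant $\eta>0$ such that, given a 3DM instance on a universe of $3m$ elements consisting of $n = O(m)$ triples and promised to admit a perfect matching of size $m$, it is NP-hard to find a collection of pairwise-disjoint triples of size greater than $(1-\eta)m$. This follows by composing the PCP theorem (yielding gap-3SAT-5 with perfect completeness) with classical Karp-style variable and clause gadgets for 3DM, which preserve perfect completeness and incur only a constant-factor blow-up in the number of triples per variable or clause; bounded occurrence of 3SAT-5 ensures the linear relationship $n = O(m)$.

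Given such a 3DM instance, I build the \probsp{} instance by taking the universe $U$ as-is and letting $\mathcal{S}$ consist of the $n$ triples viewed as sets of size $3$. Fix any $\varepsilon<1/3$. For any subset $A_i\subseteq S_i$ with $|A_i|\ge (1-\varepsilon)|S_i|>2$, integrality of $|A_i|$ and $|S_i|=3$ force $A_i=S_i$. Hence the bicriteria relaxation is vacuous on instances of this form, and the problem reduces to ordinary 3-Set Packing.

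In the YES case, the perfect 3DM matching supplies $m=\alpha n$ disjoint sets (with $\alpha=m/n>0$ a positive constant by the linear-size bound) whose union is $U$, witnessing the partition hypothesis. In the NO case, any family of $\beta n$ disjoint subsets $A_1,\ldots,A_{\beta n}$ satisfying the size condition corresponds, by the above, to $\beta n$ disjoint full triples---a 3DM matching of size $\beta n$. Choosing $\beta<(1-\eta)\alpha$ then yields the required contradiction, and the triple $(\varepsilon,\alpha,\beta)$ satisfies the conclusion of the theorem.

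The main obstacle is establishing the gap-3DM hardness with \emph{perfect completeness} and $n=O(m)$. Standard inapproximability results for bounded-occurrence 3DM (in the style of Chlebík--Chlebíková) primarily establish soundness; obtaining perfect completeness requires careful bookkeeping of the gadgets to verify that under a satisfying assignment every gadget-internal element is exactly covered by a triple. An alternative route would be a direct reduction from gap Label Cover producing sets of unbounded size, where the bicriteria $\varepsilon$-relaxation has non-trivial content; that approach might give cleaner control over the parameters at the cost of additional technical overhead, but the 3DM route already yields the qualitative statement claimed.
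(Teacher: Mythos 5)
Your overall approach is sound and, in spirit, quite close to the paper's. Both reductions produce a Set Packing instance whose sets have small constant size, so the $\varepsilon$-relaxation is \emph{vacuous}: the paper's sets have size $k=15$ and $\varepsilon<1/30$, which forces $|A_i|\ge \lceil 14.5\rceil = 15$, i.e., $A_i=S_i$, exactly as in your 3-element sets with $\varepsilon<1/3$. The paper routes through Feige's degree-5 3-SAT hardness and a hypercube partition-system gadget, which is a more general machinery (and in principle could yield larger $\varepsilon$ with different parameters), whereas you go directly through a gap version of 3-Dimensional Matching. Both yield the perfect-completeness (partition) promise cleanly. Your worry about establishing gap-3DM with perfect completeness and linear size is addressable by citation: Petrank's ``gap location'' paper (1994) shows precisely that Max-3DM is APX-hard on instances promised to admit a perfect matching, with bounded occurrence, so the needed ingredient is off-the-shelf rather than an obstacle.

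There is, however, a genuine error in the soundness argument: you write ``Choosing $\beta<(1-\eta)\alpha$ then yields the required contradiction,'' but the inequality must go the other way. In the NO case every collection of pairwise disjoint triples has size at most $(1-\eta)m=(1-\eta)\alpha n$, so a family of $\beta n$ disjoint full triples contradicts this only when $\beta n > (1-\eta)\alpha n$, i.e., $\beta>(1-\eta)\alpha$. Together with the theorem's requirement $\alpha>\beta$, the correct window is $(1-\eta)\alpha<\beta<\alpha$, which is nonempty precisely because $\eta>0$. With $\beta<(1-\eta)\alpha$ as written, a $\beta n$-size matching may exist even in the NO case and no contradiction arises. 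Once the direction is flipped (and Petrank cited for the gap-3DM-with-perfect-matching hardness), the argument is correct and gives the claimed qualitative statement.
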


\begin{table}[htbp]
\centering
\renewcommand{\arraystretch}{1.1}
\caption{A Table Summarizing Bicriteria Results}
\begin{tabular}{c|c|c|c|c}
 &{\small  Algorithm }& {\small Reference} & {\small Hardness } & {\small Reference }\\ \hline
\multirow{3}{*}{\shortstack{{\small$(\alpha,\beta)$-\probms{}}\\ {\small(\Cref{def:ms})}}}
  & {\footnotesize$(1-1/e,1)$ }          & {\footnotesize\cite{FGMS06}  }  & \multirow{3}{*}{{\small$(1-\varepsilon,1)$}} & \multirow{3}{*}{{\small\cite{NBY06}}} \\
  \cline{2-3}
  & {\footnotesize$(1-1/e+10^{-180},1)$} & {\footnotesize\cite{FV06}  }   &                                       &                          \\
  \cline{2-3}
  & {\footnotesize$(1-1/e+0.02,1.5)$}    & {\footnotesize[This work]} &                                       &                          \\ \hline

\multirow{3}{*}{\shortstack{{\small$(\alpha,\beta)$-\probsc{}}\\ {\small(\Cref{def:sc})}}}
  & \multirow{3}{*}{{\small$(1-1/k,1/k)$}}
  & \multirow{3}{*}{{\small\cite{golovin2005max}}}
  & \multirow{3}{*}{{\small$(1-\delta,1-\varepsilon)$}}
  & \multirow{3}{*}{{\small[This work]}} \\
  & & & & \\
  & & & & \\ \hline

\multirow{3}{*}{\shortstack{{\small$(\alpha,\beta)$-\probsp{}}\\ {\small(\Cref{def: SP})}}}
  & \multirow{3}{*}{{\small$(1-\delta,\varepsilon)$}}
  & \multirow{3}{*}{{\small[This work]}}
  & \multirow{3}{*}{{\small$(1-\delta,1-\varepsilon)$}}
  & \multirow{3}{*}{{\small[This work]}} \\
  & & & & \\
  & & & & \\ \hline
\end{tabular}
\label{table:1}
\end{table}

 We note that in bicriteria \probsp{}, we assume the existence of $m$ sets partitioning $U$, while in general \probsp{}, this might not be the case.

We give the algorithm for bicriteria \probms{} (\Cref{thm:bicrit-MS}) in \Cref{sec:alg_ms}. We show the hardness for bicriteria \probsc{} (\Cref{thm:hardness_sc_2}) in \Cref{sec:hardness_sc}. The algorithm and the hardness for bicriteria \probsp{} (\Cref{thm:alg_sp} and \Cref{thm:hardness_sp_2}) are in \Cref{sec:alg_sp} and \Cref{sec:hardness_sp}. 

\subsection{Related works}\label{sec: rel-work}
Some of the broader categories for bicriteria objectives in scheduling
include 
(1)  minimizing a linear combination of different objectives \cite{angel2001fptas, hoogeveen2003preemptive, mnich2015scheduling, gupta2023bicriteria}, 
(2) minimizing one objective while constraining the others \cite{golovin2005max, shabtay2012two,  shabtay2013survey}, and
(3) obtaining a Pareto optimal with respect to different objectives \cite{papadimitriou2000approximability, angel2001fptas, shabtay2012two, li2023bicriteria}.
Note that Pareto optimal solutions (category (3)) also satisfy the criteria of categories (1) and (2).
Our problems of \probms{} and \probsc{} with a constrained number of rejected jobs or agents are most similar to those into category (2).  
Scheduling with rejections are problems in bicriteria scheduling where one of the objectives is a function of the set of rejected jobs \cite{bartal2000multiprocessor, hoogeveen2003preemptive, angel2001fptas, TkindtBillaut, shabtay2013survey, pei2020new, liu2020approximation,  liu2021approximation}. 
In the works studying the trade-off between makespan and the rejection penalty using the Pareto optimality criteria, results either only hold for 2 machines \cite{shabtay2012two}, or are only polynomial time for a constant number of machines \cite{angel2001fptas, shabtay2013survey}.

For the classic problem of minimizing the makespan on unrelated machines, it is a long-standing open problem in scheduling theory to develop a polynomial time algorithm that beats the 2-approximation \cite{DBLP:journals/mp/LenstraST90}. 
Several lines of work explore additional assumptions on the input that make it easier to break the boundary of 2, including in the FPT setting \cite{horowitz1976exact, jansen1999improved, mnich2015scheduling} and in graph balancing 
\cite{EbenlendrKS14, huang2015combinatorial, DBLP:conf/icalp/JansenR19}. 
Algorithms in the restricted assignment setting ($p_{i,j} \in \{\infty, p_j\}$)   estimate the optimal makespan within factors strictly less than 2 but do not necessarily terminate in polynomial time \cite{svensson2011santa, jansen2017configuration}. 

\probsc{} was first introduced in the job-machine scheduling setting \cite{woeginger1997polynomial, epstein1999approximation}. 
The configuration LP was introduced for \probsc{} by Bansal and Sviridenko  \cite{bansal2006santa}.
Asadpour and Saberi prove that the CLP has integrality gap $\Omega(\sqrt{n})$ 
and round the solution to the CLP to obtain a $\tilde{O}(\sqrt{n})$ approximation algorithm \cite{asadpour2007approximation}, before the improvement to a $\tilde{O}(m^{\varepsilon})$-approximation algorithm with running time $m^{O(1/\varepsilon)}$, for $\varepsilon = \Omega( \log \log m / \log m)$, by Chakrabarty, Chuzhoy, and Khanna 
\cite{DBLP:conf/focs/ChakrabartyCK09}. 
Recently, Bamas et al. showed that for any $\alpha \ge 2,\varepsilon>0$, a polynomial-time $(2-1/\alpha)$-approximation for \probms{} leads to a polynomial-time $(\alpha+\varepsilon)$ approximation for \probsc{}, and under the two-value restriction, they are actually equivalent \cite{BLMRS24}.

For the bicriteria \probms{} problem, Chekuri and Khanna \cite{chekuri2005polynomial} observed that the algorithm of Lenstra, Shmoys, and Tardos already schedules at least a $1/2$ of the jobs in makespan $T$. 
In the case when $\beta=1$, bicriteria \probms{} is a special case of the Maximum General Assignment Problem.   
Previous work on the Maximum General Assignment Problem by
Fleischer, Goemans, Mirrokni, and Sviridenko \cite{FGMS06} proves one can schedule a $1-1/e$ fraction of jobs in makespan $T$. 
Feige and Vondrák \cite{FV06} improved the fraction to $1-1/e+10^{-180}$. The best NP-inapproximability ratio of the Maximum General Assignment Problem is $10/11$ by Chakrabarty and Goel \cite{CG10}. However, in their construction, the jobs are weighted, which makes it incapable to fit into our setting. It was proven by Nutov, Beniaminy, and Yuster \cite{NBY06} that the unit weight case (a.k.a. maximizing the number of jobs scheduled) is also APX-hard.

For bicriteria \probsc{}, Golovin proposed an algorithm which for any $k \in \mathbb{Z}^+$, finds an allocation for \probsc{} where  a $(1-1/k)$ fraction of agents obtain value $1/k\cdot T$ \cite{golovin2005max}.

For the \probsp{} problem where we are given as input a collection $\mathcal{S}$ of sets over a universe $U$, the goal is to find the maximum number of disjoint sets from $\mathcal{S}$. By the simple correspondence between $\textsc{Independent Set}$ and \probsp{}, there is a polynomial time $O\left(\frac{|\mathcal S|(\log \log |S|)^2}{(\log |S|)^3}\right)$-approximation algorithm \cite{Fei04} for it, and it cannot be approximated within a factor less than $\Omega(|\mathcal{S}|^{1-\varepsilon})$ for any $\varepsilon>0$ unless P=NP \cite{haastad1999clique, zuckerman2006linear}. In terms of the universe size $|U|$, Halldórsson, Kratochvíl and Telle gave an algorithm that admits an $O(\sqrt{|U|})$ approximation \cite{halldorsson2000independent}, which matches the $\textsc{Independent Set}$-based hardness $\Omega(|U|^{1/2-\varepsilon})$ \cite{haastad1999clique, zuckerman2006linear}. The special case where each set has a bounded size $k$ (known as $k$-\probsp{}) is also widely studied \cite{HS89,Hal95,Ber00,LOSZ20,Neu21,Neu23}. The state-of-the-art approximation ratios are $O(k)$ \cite{Cyg13,LSV13}, while the hardness was recently improved from the long-standing $\Omega(k/\log k)$ \cite{HSS06} to $\Omega(k)$ \cite{LST24}.

\subsection{Preliminaries}


We begin with preliminaries for \probms{}.
It is helpful to consider the bipartite graph $G=(V,E)$ where $V=M \dot\cup J$ and $E$ is the set of (machine, job) pairs $(i,j)$ where $p_{i,j} <\infty$. A \textit{schedule} $\mathcal O$ is simply a subset of $E$. $T$ denotes the optimal makespan.
In our algorithms for $(\alpha,\beta)$-bicriteria \probms{}, we partition $E$ into disjoint sets of \emph{large} and \emph{small} edges, $E_L$ and $E_S$, according to the processing times
$
    E_L  =\{(i,j) \mid i \in M, j \in J, \frac{T}{2} < p_{i,j}$$ \le T\},$ and $
    E_S  =\{(i,j) \mid i \in M,j \in J, 0 \le $$ p_{i,j} \leq \frac{T}{2}\},$
and define $G_L$ and $G_S$ to be the bipartite graphs induced by $E_L$ and $E_S$, respectively. We let $m^{\star}$ denote the size of a maximum matching in $G_L$. 
When we input an instance of \probms{} into an algorithm for $(\alpha,\beta)$-bicriteria \probms, we denote the input with the shorthand $\Gamma=(J,M,T,E')$, for $J$ the jobs, $M$ the machines, $T$ the optimal makespan, and $E' \subseteq E$ the edges along which we make assignments.

\begin{definition}[\probms{} Problem]
\label{def:ms}
    In the \probms{} problem, we are given as input a set of $m$ machines $M$, a set of $n$ jobs $J$, and a table $\{p_{i,j}\}_{i \in M,j \in J}$, where $p_{i,j}$ is the processing time of job $j$ on machine $i$. We write $p_{i,j}=\infty$ if job $j$ cannot be scheduled on machine $i$. The goal is to assign jobs to machines, such that the maximum total processing time on any machine is minimized. 

     For $0\le \alpha \le 1 \le \beta \le 2$, $(\alpha,\beta)$\emph{-bicriteria} \probms{} is the task where given the optimal makespan $T$, the goal is to schedule a subset of $\alpha\cdot n$ jobs on $M$ within makespan $\beta \cdot T$.
\end{definition}

The definitions for \probsc{} and \probsp{} and their bicriteria variants are as follows.  Other notations are self-contained in those relevant sections.

\begin{definition}[\probsc{} Problem]
\label{def:sc}
    In the \probsc{} problem, we are given as input a set of $n$ agents $A$, a set of $m$ items $I$, and a table $\{p_{i,j}\}_{i \in A, j \in J}$ where $p_{i,j}$ is the value agent $i$ can get on receiving item $j$. The goal is to assign items to agents to maximize the minimum total value of any agent.

    For $0 \le \alpha,\beta \le 1$, $(\alpha,\beta)$\emph{-bicriteria} \probsc{} is the task where given the optimum value $T$, the goal is to
    allocate items to agents so that at 
     least $\alpha \cdot n$ agents have total value at least $\beta\cdot T$.
\end{definition}

\begin{definition}[\probsp{} Problem]\label{def: SP}
    In the \probsp{} problem, we are given as input a collection of sets $\mathcal S$ over a universe $U$, and the goal is to find the maximum number of disjoint sets in $\mathcal S$.

    For $0 \le \alpha,\beta \le 1$, $(\alpha,\beta)$\emph{-bicriteria} \probsp{} is the task where given the guarantee that there are $m$ sets in $\mathcal S$ partitioning $U$, the goal is to find $m'=\alpha \cdot m$ many sets $S_1,\ldots,S_{m'}$ in $\mathcal S$ together with their subsets $A_1\subseteq S_1,\ldots,A_{m'} \subseteq S_{m'}$, such that for every $i \in [m'], |A_i| \ge \beta \cdot |S_i|$, and the sets $A_1,\ldots,A_{m'}$ are disjoint.
\end{definition}

\section{An Algorithm for $(0.6533, 1.5)$-bicriteria \probms{}}
\label{sec:alg_ms}

In this section, we prove \Cref{thm:bicrit-MS}. 

\algms*

Our algorithm has three sub-procedures:
\begin{enumerate}
    \item a deterministic \Cref{alg:1}, which simply computes a maximum matching on $G_L$ and schedules these $m^{\star}$ jobs within makespan $T$;
    \item a randomized \Cref{alg:2}, which given the promise that all jobs can fit into makespan $T$, schedules a $(1-\frac{1}{e})$ fraction of jobs in expectation within makespan $T$;
    \item a deterministic \Cref{alg:3}, which schedules $\frac{1}{6}(n-m^{\star})$ many jobs within makespan $\frac{T}{2}$, using only edges in $E_S$. 
\end{enumerate}
While each sub-procedure on its own is simple (especially \Cref{alg:1} and \Cref{alg:2}),  the salient point is that they can be combined in a way that is stronger than any of them individually. 

We describe each algorithm, then show how to combine them in \Cref{sec:alg_combined}.

\subsection{\Cref{alg:1}: The Matching Algorithm on Large Jobs}
\label{sec:alg_1}

 \Cref{alg:1} simply finds a matching in $G_L$ and schedules these $m^{\star}$ jobs. Given the result of \Cref{alg:1}, the following \Cref{lem:lb_using_only_es} shows a lower bound on the number of jobs that can be scheduled using only small edges. 

\begin{algorithm}[h!]
\caption{Matching Large Edges}\label{alg:1}
\begin{algorithmic}[1]
  \State \textbf{Input}: A \probms{} instance $\Gamma=(J,M,T,E_L)$.
  \State Find a maximum matching $\mathcal{O}^{\star}$ on the  bipartite graph $G_L=(V=M \dot\cup J, E=E_L)$.
  \State Output $\mathcal{O}^{\star}$.
\end{algorithmic}
\end{algorithm}

\begin{restatable}{lemma}{algmatch}
\label{lem:lb_using_only_es}
    It is possible to schedule at least $n-m^{\star}$ many jobs within makespan $T$ using only edges in $E_S$, where $m^{\star}$ is the matching size output by \Cref{alg:1}. 
\end{restatable}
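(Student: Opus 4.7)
The plan is to leverage the optimal schedule directly. Since the instance admits a schedule of all $n$ jobs with makespan $T$, fix such an optimal schedule $\mathcal{O}^{\mathrm{OPT}} \subseteq E$. I will split this schedule into its ``large-edge'' and ``small-edge'' parts and argue that discarding the large-edge part already leaves us with enough jobs.

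First I would observe the key structural fact: no machine in $\mathcal{O}^{\mathrm{OPT}}$ can be incident to two edges of $E_L$. Indeed, each edge in $E_L$ carries processing time strictly greater than $T/2$, so two such edges on one machine would push its load above $T$, contradicting that $\mathcal{O}^{\mathrm{OPT}}$ has makespan at most $T$. Hence the subset $\mathcal{O}^{\mathrm{OPT}} \cap E_L$ is a matching in $G_L$; call its size $k$. By the maximality of $\mathcal{O}^{\star}$ we have $k \le m^{\star}$.

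Next I would define $\mathcal{O}' = \mathcal{O}^{\mathrm{OPT}} \setminus E_L = \mathcal{O}^{\mathrm{OPT}} \cap E_S$. This is a partial schedule using only small edges; it assigns exactly the $n-k$ jobs that $\mathcal{O}^{\mathrm{OPT}}$ routed along small edges (each job is incident to at most one edge of the schedule). Because $\mathcal{O}'$ is a subset of the edges of $\mathcal{O}^{\mathrm{OPT}}$, the load on every machine in $\mathcal{O}'$ is bounded by its load in $\mathcal{O}^{\mathrm{OPT}}$, which is at most $T$. Thus $\mathcal{O}'$ is a valid $E_S$-only schedule with makespan at most $T$, and it contains $n-k \ge n - m^{\star}$ jobs, proving the lemma.

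There is no serious obstacle here: the argument is essentially a one-step structural observation about the optimum, combined with the definition of $m^{\star}$ as the maximum matching size. The only subtlety worth stating carefully in the write-up is that the edges of $\mathcal{O}^{\mathrm{OPT}}$ incident to a single machine cannot both lie in $E_L$, which is where the threshold $T/2$ in the definitions of $E_L$ and $E_S$ is used.
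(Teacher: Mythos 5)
Your proof is correct and is essentially identical to the paper's: both fix an optimal makespan-$T$ schedule, note that its $E_L$-edges form a matching in $G_L$ of size at most $m^{\star}$, and then discard those edges to obtain an $E_S$-only schedule of at least $n-m^{\star}$ jobs. No meaningful difference.
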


\begin{proof}
    Recall the guarantee that it's possible to schedule all $n$ jobs of $J$ on $M$ within makespan $T$. Let $\mathcal{O} \subseteq E$ be such a schedule. For every machine $i \in M$, there is at most one job $j$ scheduled on it in $\mathcal{O}$ with $(i,j) \in E_L$, as otherwise the makespan will exceed $T$. Therefore, $\mathcal{O}  \cap E_L$ forms a matching on $G_L$.  Since the matching output by \Cref{alg:1} is a maximum matching,
    we have $m^{\star} \ge |\mathcal{O}  \cap E_L|$. 
    
    By simply throwing out edges in $\mathcal{O}  \cap E_L$ from the schedule $\mathcal{O} $, we prove the existence of a schedule using only $E_S$, as 
    \[
        |\mathcal{O} \cap E_S|= |\mathcal{O} |-|\mathcal{O}  \cap E_L| \ge |\mathcal{O} |-m^{\star} =n-
        m^{\star}. 
         \qedhere
    \]
\end{proof}

\subsection{\Cref{alg:2}: A Randomized LP-rounding Algorithm}

We define the \emph{Configuration LP} for \probms{}, which we sometimes abbreviate as the CLP. For machine $i \in M$, let $
\mathcal{C}_i$ be the subsets of jobs that have combined processing time at most $T$ on machine $i$, i.e. $\mathcal{C}_i= \{C \subseteq J \mid \sum_{j \in C}p_{i,j} \leq T\}$. Let $y_{i,C}$ be the variable indicating whether configuration $C$ is scheduled on machine $i.$ Then the configuration LP is the following:
\begin{equation}\label{eq:lp}
\begin{aligned}
    \sum_{C \in \mathcal{C}_i}y_{i,C} &= 1 \hspace{1mm}  \forall i \in M, \quad
     \sum_{i \in M}\sum_{C \in \mathcal{C}_i : j \in C}y_{i,C} = 1 \hspace{1mm}  \forall j \in J, \quad
     y_{i,C} \geq 0 \hspace{1mm}  \forall i \in M, \forall C \in \mathcal{C}_i.
\end{aligned}
\end{equation}

If all jobs can be scheduled within makespan $T$ in an integral solution, then there is a feasible solution to the CLP. 
Further, a feasible solution to the CLP can be found to any degree of accuracy in polynomial time~\cite{bansal2006santa, svensson2011santa}. 

Given a solution to the CLP, the rounding algorithm simply lets each machine sample a configuration $C$ in $\mathcal{C}_i$ with probability $y_{i,C}$, then schedule all unscheduled jobs that it samples. See the pseudo-code in \Cref{alg:2}.  


\begin{algorithm}[H]
\caption{Randomized Rounding of the CLP}\label{alg:2}
\begin{algorithmic}[1]
  \State \textbf{Input}: A \probms{} instance $\Gamma=(J,M,T,E)$.
  \State Solve the CLP defined in \Cref{eq:lp}.
  \State Initialize $\mathcal{O}^{\star}=\emptyset$.
  \For{every machine $i \in M$}
    \State Sample a configuration $C \in \mathcal C_i$ with probability $y_{i,C}$.
    \For{every job $j$ in the sampled configuration}
    \State Add $(j,i)$ to $\mathcal{O}^{\star}$ if $j$ is not yet scheduled in $\mathcal{O}^{\star}$.
    \EndFor
  \EndFor
  \State Output $\mathcal{O}^{\star}$.
\end{algorithmic}
\end{algorithm}

\begin{restatable}{lemma}{algrandom}
\label{lem:alg_2}
    Given the promise that all jobs can be integrally scheduled within makespan $T$, \Cref{alg:2} schedules at least a  $(1-\frac{1}{e})$ fraction of the jobs in expectation within makespan $T$.
\end{restatable}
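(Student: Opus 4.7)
The plan is a standard LP-rounding analysis. First I would fix a job $j \in J$ and define, for each machine $i \in M$, the marginal $x_{i,j} := \sum_{C \in \mathcal C_i : j \in C} y_{i,C}$, which is precisely the probability that machine $i$ samples a configuration containing $j$ in Line 5 of \Cref{alg:2}. By the job constraint in the CLP~\eqref{eq:lp}, we have $\sum_{i \in M} x_{i,j} = 1$. Since the sampling choices at different machines are independent, the probability that $j$ remains unscheduled at the end of the algorithm is exactly
\[
\Pr[j \text{ not scheduled}] \;=\; \prod_{i \in M}(1 - x_{i,j}).
\]

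Next I would upper-bound this product subject to $\sum_i x_{i,j} = 1$ and $x_{i,j} \in [0,1]$. The function $\sum_i \log(1 - x_{i,j})$ is concave, so by Jensen's inequality (or a direct Lagrangian/AM-GM argument) the product is maximized when all nonzero marginals are equal; if $k$ of the $x_{i,j}$'s are nonzero, the maximum is $(1 - 1/k)^k$. Since $(1 - 1/k)^k \le 1/e$ for every positive integer $k$, we conclude $\Pr[j \text{ not scheduled}] \le 1/e$, hence $\Pr[j \text{ scheduled}] \ge 1 - 1/e$. Summing over $j \in J$ by linearity of expectation yields that the expected number of scheduled jobs is at least $(1 - 1/e)\,n$, as required.

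Finally, I would observe that the makespan bound is immediate and deterministic: each machine $i$ samples a single configuration $C \in \mathcal C_i$, and by the very definition of $\mathcal C_i$ every such $C$ has total processing time at most $T$ on machine $i$. Since \Cref{alg:2} only schedules jobs from the configuration sampled by the machine they are assigned to (and at most once), the load on each machine is bounded by $T$. There is really no ``hard step'' here — the only mild subtlety is justifying that the CLP is feasible and solvable to sufficient accuracy in polynomial time under the promise, which I would cite from \cite{bansal2006santa, svensson2011santa} rather than reprove, and noting that one can round the approximate CLP solution with negligible loss so as not to affect the $(1 - 1/e)$ bound.
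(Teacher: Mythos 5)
Your proof is correct and follows essentially the same approach as the paper's: linearity of expectation over jobs, independence across machines to write $\Pr[j \text{ unscheduled}]$ as a product, and the CLP job constraint to bound that product by $1/e$. The only cosmetic difference is that the paper finishes with the one-line inequality $1-x \le e^{-x}$ applied termwise, whereas you invoke a Jensen/Lagrangian argument to show the product is maximized at equal marginals, giving $(1-1/k)^k \le 1/e$; both are fine, though the paper's route is a bit more streamlined.
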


\begin{proof}
A job is not scheduled only if it is not sampled by any machine. Thus,
\begin{align*}
    \mathbb{E}[\text{number of scheduled jobs}] &= \sum_{j \in J} \mathbb{P}[j \text{ sampled}] \\
    &= 
    \sum_{j \in J} \Big (1-\prod_{i \in M}\mathbb{P}[j \text{ not  sampled on machine }i] \Big )\\
    & = \sum_{j \in J} \Big (1-\prod_{i \in M}\Big ( 1-\sum_{C \in \mathcal{C}_i: j \in C} y_{i,C}
    \Big ) \Big )\\
    &\geq \sum_{j \in J} \Big (1-e^{-\sum_{i \in M}\sum_{C \in \mathcal{C}_i: j \in C} y_{i,C}}
    \Big ) \\
    &= \sum_{j \in J} \left(1-\frac{1}{e}
    \right) = \left(1-\frac{1}{e}\right)\cdot n.  
     \qedhere
\end{align*}

\end{proof}
We note that this result was known from \cite{FGMS06}.

\subsection{\Cref{alg:3}: A Greedy Algorithm using only Small Jobs}

Our \Cref{alg:3} deterministically schedules $\frac{1}{6}(n-m^{\star})$ many jobs within makespan $\frac{T}{2}$ using only edges in $E_S$, where $m^{\star}$ is the maximum matching size on $G_L$ as computed by \Cref{alg:1}. \Cref{alg:3} first computes a schedule of makespan $T$ by greedily picking the job with minimum processing time\footnote{This is just a specific instantiation of the  classic \emph{List Scheduling} algorithm \cite{graham1969bounds}.}, then truncates each machine's schedule up to time $\frac{T}{2}$. 

\begin{algorithm}[h!]
\caption{Deterministic Greedy Algorithm}\label{alg:3}
\begin{algorithmic}[1]
    \State \textbf{Input}: A \probms{} instance $\Gamma=(J,M,T,E_S)$.
    \State Initialize $c_i \gets 0,d_i \gets 0$ for all $i \in M$.
    \State Set up an empty list $\ell_i$ for all $i \in M$.
    \State Initialize $R$ to be $J$, the set of all jobs.
    \While{$R \neq \emptyset$}
      \State Let $p_{i^{\star},j^{\star}}$ be the minimum $p_{i,j}$ such that $j \in R$ and $c_i + p_{i,j} \le T$.
      \If{Such a $(i^{\star},j^{\star})$ exists}
        \State Append $j^\star$ to the list $\ell_{i^\star}$.
        \State $c_{i^\star} \gets c_{i^\star} + p_{i^\star,j^\star}$.
        \State Remove $j^{\star}$ from $R$.
      \Else
        \State \textbf{break}
      \EndIf
    \EndWhile
    \State Initialize $\mathcal{O}^{\star} = \emptyset$.
    \For{every machine $i \in M$}
      \State Sort the list $\ell_i$ according to increasing $p_{i,j}$.
      \For{job $j$ in the list $\ell_i$}
        \If{$d_i + p_{i,j} \le \frac{T}{2}$}
          \State Add $(j,i)$ to $\mathcal{O}^{\star}$.
          \State Update $d_i \gets d_i + p_{i,j}$.
        \Else
          \State \textbf{break}
        \EndIf
      \EndFor
    \EndFor
    \State Output $\mathcal{O}^{\star}$.
\end{algorithmic}
\end{algorithm}

\begin{lemma}\label{lem:ms_det}
    Let $m^{\star}$ be maximum matching size on $G_L$ computed by \Cref{alg:1}. \Cref{alg:3} schedules at least $\frac{1}{6}(n-m^{\star})$ many jobs within makespan $\frac{T}{2}$.
\end{lemma}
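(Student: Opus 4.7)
My plan is to combine a per-machine analysis of the truncation step with a global lower bound on the number of jobs scheduled by the greedy phase. First I will invoke \Cref{lem:lb_using_only_es} to fix a feasible schedule $\mathcal{O}'$ of $n-m^\star$ jobs within makespan $T$ using only edges of $E_S$; I write $\sigma'(i)$ and $c'_i \le T$ for its jobs and load on machine $i$, and $\pi'(j)$ for the machine hosting $j$ under $\mathcal{O}'$. Throughout I also denote by $n_i = |\ell_i|$ and $c_i$ the greedy phase's job count and total load on $i$, and by $k_i$ the number of kept jobs on $i$ after truncation.

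Next I would establish a per-machine truncation bound $k_i \ge n_i/3$. This is immediate when $c_i \le T/2$, since nothing is truncated. Otherwise, letting $q$ be the processing time of the first rejected job on $i$, the sorted-by-$p_{i,j}$ ordering gives $d_i \le k_i q$, while $d_i + q > T/2$ forces $q > T/(2(k_i+1))$. Each of the $n_i - k_i$ rejected jobs contributes at least $q$ to $c_i - d_i < T/2 + q$, which rearranges to $k_i > (n_i - 2)/2$; integrality together with a short check of small cases then yield $k_i \ge n_i/3$.

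Third---and this is the crux---I would prove the greedy bound $n^g := \sum_i n_i \ge (n-m^\star)/2$. The starting point is that an unscheduled job $j$ at the end of greedy must satisfy $c_i + p_{i,j} > T$ for every machine $i$ with $p_{i,j} \le T/2$. Hence every lost job $j \in \mathcal{O}' \setminus \text{greedy}$ is mapped by $\pi'$ to a ``heavy'' machine in $M_H = \{i : c_i > T/2\}$, with $p_{\pi'(j),j} > T - c_{\pi'(j)}$. Let $L_i$ be the number of lost jobs on $i \in M_H$. The key claim is $L_i \le n_i$: the lost jobs on $i$ have combined processing time at most $c'_i \le T$ with each exceeding $T - c_i$, while greedy's own jobs on $i$ have total $c_i$ with each at most $T/2$, so $n_i \ge 2 c_i/T$. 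A charging argument balancing these two constraints matches each lost job to a distinct greedy job on the same machine, giving $L_i \le n_i$; summing yields the total loss $L \le n^g$, and from $n^g \ge (n-m^\star) - L$ we conclude $n^g \ge (n-m^\star)/2$.

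Combining the two bounds produces $\sum_i k_i \ge n^g/3 \ge (n-m^\star)/6$, and by construction $d_i \le T/2$ on every machine, so the output makespan is at most $T/2$. The main obstacle I foresee is the per-machine inequality $L_i \le n_i$: the naive bound $L_i < T/(T - c_i)$ degenerates when $c_i$ is close to $T$, so the argument must exploit the fact that a near-fully-loaded heavy machine in greedy already contains many jobs (since each is of size at most $T/2$) in order to make the charging go through.
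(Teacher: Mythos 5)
Your overall decomposition matches the paper's: first show the greedy ``fill'' phase of \Cref{alg:3} packs at least $\tfrac12(n-m^\star)$ jobs into the lists (\Cref{lem:alg_3_1}), then show the truncation keeps at least a $\tfrac13$ fraction per machine (\Cref{lem:alg_3_2}). Your truncation analysis, while arranged differently, is sound and gives the same $k_i\ge n_i/3$ conclusion. The genuinely different part is the greedy bound: the paper proves $\sum_i|\ell_i|\ge\tfrac12(n-m^\star)$ via a clean recursive exchange argument (\Cref{prop:alg_3}) that swaps the greedy's chosen pair $(i^\star,j^\star)$ into an optimal schedule at the cost of at most one other job; you instead propose a per-machine charging argument $L_i\le n_i$.

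The charging route can work, but as written it has a real gap, which you yourself flag. The two constraints you cite---each lost job on $i$ exceeds $T-c_i$ (so $L_i< T/(T-c_i)$) and each greedy job on $i$ is at most $T/2$ (so $n_i\ge 2c_i/T$)---do \emph{not} imply $L_i\le n_i$; together they only give $L_i<2n_i$, which after summing and rearranging yields $n^g>\tfrac13(n-m^\star)$ and hence only $\tfrac1{9}(n-m^\star)$ overall. The observation you are missing, and the one that actually closes the charging, is that the greedy's global-minimum rule forces every lost job $j$ with $\pi'(j)=i$ to satisfy $p_{i,j}\ge p_{i,g}$ for \emph{every} greedy job $g$ placed on $i$: at the step when $g$ was added, $j\in R$, and either $(i,j)$ was feasible (so $p_{i,g}\le p_{i,j}$ by minimality) or it was infeasible (so $p_{i,j}>T-c_i^{\text{before}}\ge p_{i,g}$). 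Writing $p^\star=\max_k p_{i,g_k}$, this gives $L_i\, p^\star\le T$; if $p^\star>T-c_i$ then $T<c_i+p^\star\le(n_i+1)p^\star$ so $L_i<n_i+1$, and if $p^\star\le T-c_i$ then $L_i(T-c_i)<T$ and $n_i\ge c_i/p^\star\ge c_i/(T-c_i)$ so again $L_i<n_i+1$; by integrality $L_i\le n_i$ in both cases. Without this extra fact, the hint you give (a near-fully-loaded heavy machine has many jobs) only recovers $n_i\ge 2c_i/T$, which you already have and which is insufficient. Compared to the paper's Proposition \ref{prop:alg_3}, which sidesteps all machine-local bookkeeping by a single exchange per greedy step, your charging argument is more concrete but needs this additional structural lemma about the greedy order to be a complete proof.
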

We split the proof of \Cref{lem:ms_det} into the following \Cref{lem:alg_3_1} and \Cref{lem:alg_3_2}:

\begin{lemma}\label{lem:alg_3_1}
    At the beginning of Line 13 in \Cref{alg:3}, we have $\sum_{i \in M} |\ell_i| \ge \frac{1}{2}(n-m^{\star})$.
\end{lemma}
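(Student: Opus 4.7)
The plan is to reduce the target bound to a per-machine comparison against the schedule guaranteed by Lemma~\ref{lem:lb_using_only_es}. Fix a reference assignment $\sigma : J' \to M$ that uses only edges of $E_S$, achieves makespan $T$, and has $|J'| \ge n - m^\star$; for each $i$ let $P_i = \{j \in J' : \sigma(j) = i\}$, so the sets $P_i$ partition $J'$. Let $R$ denote the contents of the remaining-job list at the end of the while loop, and let $k = \sum_i |\ell_i|$ be the total number of jobs greedy has scheduled by then. If the loop terminates because $R = \emptyset$, then $k = n \ge (n - m^\star)/2$ and we are done, so assume it terminates via the \textbf{break}. The main bound will follow from the per-machine inequality
\begin{equation}
\label{eq:per-machine}
|\ell_i| \;\geq\; |P_i \cap R| \qquad \text{for every } i \in M,
\end{equation}
because summing~\eqref{eq:per-machine} gives $k \ge |J' \cap R| \ge |J'| - k \ge (n - m^\star) - k$, which rearranges to $k \ge (n - m^\star)/2$.

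To prove~\eqref{eq:per-machine}, fix $i$ with $P_i \cap R \ne \emptyset$ (otherwise the inequality is vacuous) and let $p^\star = \min_{j^\star \in P_i \cap R} p_{i, j^\star}$, attained at some $j^\star$. Because the \textbf{break} fires with $j^\star$ still in $R$, the final load on $i$ satisfies $c_i + p^\star > T$, hence $c_i > T - p^\star$. The key structural claim is that every $j \in \ell_i$ satisfies $p_{i,j} \le p^\star$. At the step greedy added $(i,j)$, note that $j^\star$ was in $R$ (since $R$ only shrinks) and $(i,j^\star) \in E_S$ (since the reference uses $E_S$). Either $(i,j^\star)$ was still feasible at that step, in which case the global-minimum rule forces $p_{i,j} \le p_{i,j^\star} = p^\star$; or $(i,j^\star)$ had already become infeasible, meaning the prefix load on $i$ already exceeded $T - p^\star$, and then the feasibility condition $c_i^{\text{prefix}} + p_{i,j} \le T$ used to accept $(i,j)$ yields $p_{i,j} < p^\star$. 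Combining gives $c_i \le |\ell_i| \cdot p^\star$, so $|\ell_i| > (T - p^\star)/p^\star = T/p^\star - 1$. On the reference side, the $|P_i \cap R|$ missed jobs on $i$ each have processing time at least $p^\star$ and together fit within makespan $T$, giving $|P_i \cap R| \le T/p^\star$. Since $|\ell_i|$ and $|P_i \cap R|$ are integers, the strict lower bound $|\ell_i| > T/p^\star - 1$ upgrades to $|\ell_i| \ge \lfloor T/p^\star \rfloor \ge |P_i \cap R|$, establishing~\eqref{eq:per-machine}.

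The main obstacle is the uniform bound $p_{i,j} \le p^\star$ for every $j \in \ell_i$: greedy's selection rule is \emph{global} across all machines and jobs, so a priori one cannot control which jobs land on a fixed machine $i$. The two-case reasoning above is what bridges this gap—before $(i,j^\star)$ becomes infeasible, greedy's minimum rule does the work; afterward, the load $c_i$ has already grown enough that every subsequent job on $i$ must be small merely to fit under $T$. Once that uniform bound is in hand, everything else is arithmetic and a summation.
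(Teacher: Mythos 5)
Your proof is correct and takes a genuinely different route from the paper's. The paper proves a recursive invariant (Proposition~\ref{prop:alg_3}): starting from any intermediate configuration $(R,c)$, the while loop adds at least half as many jobs as any schedule that continues from $c$ could, established by an exchange argument on the minimum-processing-time pair followed by recursion on a one-job-smaller instance. You instead fix the reference schedule $\sigma$ once and do a direct, per-machine charging argument: the uniform bound $p_{i,j}\le p^\star$ for every $j\in\ell_i$ (proved via your two-case split on whether $(i,j^\star)$ was still feasible) gives $|\ell_i|> T/p^\star-1$, and integrality upgrades this to $|\ell_i|\ge\lfloor T/p^\star\rfloor\ge |P_i\cap R|$; summing and using that greedy removed exactly $\sum_i|\ell_i|$ jobs from $R$ closes the loop. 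Both arguments yield the same factor $1/2$. The paper's recursion is more modular and easily restated as an invariant over arbitrary starting states, whereas your argument is non-recursive and isolates the concrete structural reason greedy is competitive—once the global minimum rule is ``exhausted'' on machine $i$, the growing load $c_i$ itself forces all later jobs on $i$ to be small, so the two mechanisms together cap every job on $i$ by $p^\star$. One small point worth making explicit: the break condition with $j^\star$ still in $R$ forces $p^\star>T-c_i\ge 0$, so $p^\star>0$ and the divisions by $p^\star$ are safe.
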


To prove \Cref{lem:alg_3_1}, we show in a recursive way that in the \texttt{while} loop in Line 5-12, starting with any available job set $R$ and any vector $c=(c_1,\ldots,c_m)$ of current completion times of all machines, our algorithm adds at least half of the jobs to the lists as compared to any other feasible schedule. This is formalized in the following \Cref{prop:alg_3}.

\begin{restatable}
{proposition}{proppf}
\label{prop:alg_3}
    Let $R\subseteq J$ be an arbitrary set of jobs and $c=(c_1,\ldots,c_m)$, where $ \forall i \in M, 0 \le c_i \le T$, be a vector of the current completion times of the machines. If it's possible to schedule a set $S \subseteq R$ of jobs continuing from $c$ within makespan $T$, then the while loop in \Cref{alg:3}, starting with configuration $(R,c)$, adds at least $\frac{1}{2}|S|$ many jobs to the lists.
\end{restatable}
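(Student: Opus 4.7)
The plan is to prove the proposition by induction on $|R|$, with the base case $|R|=0$ trivial since then $S=\emptyset$ and both sides equal zero.

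For the inductive step I would first handle the case where the while loop terminates immediately. If no feasible pair exists, i.e.\ $c_i+p_{i,j}>T$ for every $j\in R$ and every $i\in M$, then in particular $p_{\sigma(j),j}>T-c_{\sigma(j)}$ for every $j\in S$, contradicting the feasibility of the optimal schedule $\sigma$ unless $S=\emptyset$. In that case greedy's $0$ picks already match $\tfrac12|S|$, so assume from now on that greedy picks some $(i^\star,j^\star)$ with $p^\star := p_{i^\star,j^\star}$ minimum over $\{p_{i,j}:j\in R,\ c_i+p_{i,j}\le T\}$. After this step the configuration becomes $(R',c')$ with $R'=R\setminus\{j^\star\}$, $c'_i=c_i$ for $i\ne i^\star$, and $c'_{i^\star}=c_{i^\star}+p^\star$. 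By the inductive hypothesis applied to $(R',c')$, the remaining iterations append at least $\tfrac12|S'|$ jobs to the lists, where $S'$ is the size of the largest subset of $R'$ schedulable from $c'$. Hence greedy adds at least $1+\tfrac12|S'|$ in total, and it suffices to establish $|S'|\ge |S|-2$.

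The core of the argument is a short surgery on $\sigma$ that removes at most two jobs to turn it into a feasible schedule from $c'$. First, if $j^\star\in S$, remove it from $\sigma$. Second, if after this removal machine $i^\star$ still carries some job $j'$ under $\sigma$ (which is automatically the case when $j^\star\notin S$ or when $\sigma(j^\star)\ne i^\star$), remove one such $j'$ as well. The key observation is that $(i^\star,j')$ was feasible from $c$ because it lay in the feasible schedule $\sigma$, and $j'\in S\subseteq R$; thus by the minimality of $p^\star$ in the greedy's selection rule, $p_{i^\star,j'}\ge p^\star$. Therefore removing $j'$ frees at least $p^\star$ units of capacity on machine $i^\star$, which exactly absorbs the extra load added by greedy when going from $c$ to $c'$. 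If instead machine $i^\star$ is empty in $\sigma$ after the first removal, no second removal is needed since $T-c'_{i^\star}=T-c_{i^\star}-p^\star\ge 0$ by feasibility of $(i^\star,j^\star)$. In every branch the surviving schedule contains at least $|S|-2$ jobs and is feasible from $c'$, giving $|S'|\ge|S|-2$ and hence $1+\tfrac12|S'|\ge\tfrac12|S|$.

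The main obstacle is the case $j^\star\notin S$ with $\sigma$ placing jobs on $i^\star$, because there we exploit greedy's minimality not on the pair greedy actually picked but on a pair $(i^\star,j')$ that it could have picked; this is precisely where the "smallest $p_{i,j}$" choice (rather than, say, an arbitrary feasible choice) is essential. Once that inequality is in hand, the remaining subcases ($j^\star\in S$ with $\sigma(j^\star)=i^\star$, $j^\star\in S$ with $\sigma(j^\star)\ne i^\star$, and $i^\star$ empty under $\sigma$) are straightforward bookkeeping and all give $|S'|\ge |S|-1$ or $|S|-2$, completing the induction.
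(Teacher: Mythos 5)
Your proof is correct and follows essentially the same route as the paper: analyze one greedy step, use the minimality of $p_{i^\star,j^\star}$ to argue that at most two jobs of the reference schedule must be discarded to obtain a feasible schedule from the post-step configuration $(R',c')$, and recurse to get $1+\tfrac12(|S|-2)\ge\tfrac12|S|$. The paper streamlines the case analysis by first swapping $j'$ for $j^\star$ to assume without loss of generality that $j^\star\in S$, whereas you perform the surgery directly; both work, though your parenthetical that machine $i^\star$ ``automatically'' carries a job when $\sigma(j^\star)\ne i^\star$ is inaccurate (it could be empty) --- harmlessly so, since you explicitly handle the empty-machine case.
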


\begin{proof}
    If the maximum scheduleable set $S=\emptyset$, there is nothing to show. Otherwise, fix a choice of the maximum $S$ and an arbitrary schedule $\mathcal{O}$ of $S$ on the machines continuing from $c$. Let $(i^{\star},j^{\star})$ be the (machine, job) pair picked by \Cref{alg:3}, which has minimum processing time among all available pairs.

    Suppose $j^{\star} \notin S$, i.e., $\mathcal O$ does not schedule $j^{\star}$ anywhere. Then $\mathcal O$ must have scheduled some other job $j' \in S$ on machine $i^{\star}$, since, otherwise, $\mathcal O \cup \{(i^{\star},j^{\star})\}$ is a valid schedule, contradicting the maximality of $S$. Furthermore, $p_{i^{\star},j^{\star}} \le p_{i^{\star},j'}$ as scheduling $j'$ on $i^{\star}$ was a valid option but we chose the smallest available processing time. Thus, $S \setminus \{j'\} \cup \{j^{\star}\}$ is also a maximum scheduleable set with $O \setminus \{(i^{\star},j')\} \cup \{(i^{\star},j^{\star})\}$ being the schedule. 
    
    In the following, we assume without loss of generality that $j^{\star} \in S$. Again, we can without loss of generality assume $\mathcal O$ schedules some job $j'$ on machine $i^{\star}$ ($j'$ could be another job or $j^{\star}$ itself). Given $p_{i^{\star},j^{\star}} \le p_{i^{\star},j'}$, it's possible to schedule the set $S \setminus \{j^{\star}, j'\}$, which is a subset of $R'=R \setminus \{j^{\star}\}$, starting with the completion times $c'=(c_1,\ldots,c_{i^{\star}}+p_{i^{\star},j^{\star}},\ldots,c_m)$.  
    Invoking \Cref{prop:alg_3} on $(R',c')$, we have at least $\frac{1}{2}|S \setminus \{j^{\star}, j'\}|$ many jobs in the lists. Adding the job $j^{\star}$ gives us $\frac{1}{2}|S \setminus \{j^{\star},j'\}|+1 \ge \frac{1}{2}|S|$, which finishes the proof.
\end{proof}

Given \Cref{prop:alg_3}, \Cref{lem:alg_3_1} follows by plugging the set of $n-m^{\star}$ jobs guaranteed by \Cref{lem:lb_using_only_es} into the set $S$ in \Cref{prop:alg_3}, and setting $c=(0,\ldots,0)$.

\begin{lemma}\label{lem:alg_3_2}
    Line 13 $\sim$ Line 20 of \Cref{alg:3} schedules $\frac{1}{3} \sum_{i \in M}|\ell_i|$ jobs within makespan $\frac{T}{2}$.
\end{lemma}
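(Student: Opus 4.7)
The plan is to prove the per-machine bound $t_i \geq |\ell_i|/3$, where $t_i$ is the number of jobs scheduled on machine $i$ in Lines 13--20, and then sum over $i \in M$. Two invariants are inherited from the first phase (Lines 5--12) and hold throughout the second: every $j \in \ell_i$ satisfies $p_{i,j} \le T/2$, because the input edge set is $E_S$; and $c_i = \sum_{j \in \ell_i} p_{i,j} \le T$, because each addition is gated by the test $c_i + p_{i,j} \le T$ in Line 6.

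Fixing a machine $i$, I would set $k_i = |\ell_i|$ and sort the list as $p_1 \le p_2 \le \cdots \le p_{k_i}$. The crux is the standard ``smaller half has smaller sum'' inequality: pairing $p_j$ with $p_{j + \lceil k_i/2 \rceil}$ for $j = 1, \ldots, \lfloor k_i/2 \rfloor$ and summing yields
\[
\sum_{j=1}^{\lfloor k_i/2 \rfloor} p_j \;\le\; \sum_{j=\lceil k_i/2 \rceil + 1}^{k_i} p_j \;\le\; \frac{c_i}{2} \;\le\; \frac{T}{2}.
\]
Hence the prefix-fit test $d_i + p_{i,j} \le T/2$ in the second phase succeeds for at least the first $\lfloor k_i/2 \rfloor$ entries, so $t_i \ge \lfloor k_i/2 \rfloor$.

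For $k_i \ge 3$ this already gives $t_i \ge \lfloor k_i/2 \rfloor \ge (k_i-1)/2 \ge k_i/3$. For $k_i \in \{1,2\}$ the pointwise cap $p_1 \le T/2$ ensures that at least one job fits, so $t_i \ge 1 \ge k_i/3$. Summing,
\[
\sum_{i \in M} t_i \;\ge\; \frac{1}{3}\sum_{i \in M} |\ell_i|,
\]
and the makespan bound $T/2$ is immediate from the condition governing each insertion in the second phase.

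I expect no substantive obstacle: the result is a clean combination of (i) the per-machine sum cap $c_i \le T$ inherited from the first phase, (ii) the pointwise cap $p_{i,j} \le T/2$ from restricting to $E_S$, and (iii) the sorted-half inequality. The constant $1/3$ is tight for this style of argument, as witnessed by $k_i = 3$ with three equal jobs of size $T/3$, where only a single job fits in $T/2$.
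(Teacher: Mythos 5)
Your argument is correct in substance and follows essentially the same route as the paper: both show the smallest $\lfloor |\ell_i|/2 \rfloor$ jobs in $\ell_i$ fit within $T/2$, then use $\lfloor |\ell_i|/2 \rfloor \ge |\ell_i|/3$ for $|\ell_i| \ge 3$ and handle $|\ell_i| \in \{1,2\}$ separately via the pointwise cap $p_{i,j} \le T/2$. One slip in your displayed chain: the middle inequality $\sum_{j=\lceil k_i/2\rceil+1}^{k_i} p_j \le c_i/2$ is false in general, since the \emph{upper} half of a sorted list has sum at least, not at most, half the total (e.g.\ $k_i=3$ with $p=(0,0,1)$). The pairing you set up already yields the right conclusion directly: the two index ranges are disjoint, so $2\sum_{j=1}^{\lfloor k_i/2\rfloor}p_j \le \sum_{j=1}^{\lfloor k_i/2\rfloor}p_j + \sum_{j=\lceil k_i/2\rceil+1}^{k_i}p_j \le c_i$, giving $\sum_{j=1}^{\lfloor k_i/2\rfloor}p_j \le c_i/2 \le T/2$ without the erroneous intermediate step.
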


\begin{proof} 
Consider each machine $i$.
    If $|\ell_i|=1$, since all edges are in $E_S$, every $p_{i,j}$ is at most $\frac{T}{2}$ and we can always schedule that unique job on machine $i$.
    
    If $|\ell_i| \ge 2$, then given $\sum_{j \in \ell_i} p_{i,j} \le T$, the smaller $\lfloor \frac{|\ell_i|}{2}\rfloor$ many jobs have total processing time at most $\frac{T}{2}$. If $|\ell_i|=2$, then $\lfloor \frac{|\ell_i|}{2}\rfloor=\frac{|\ell_i|}{2}$. Otherwise for $|\ell_i| \ge 3$, we have $\lfloor \frac{|\ell_i|}{2}\rfloor \ge \frac{|\ell_i|-1}{2} \ge \frac{|\ell_i|}{2}-\frac{|\ell_i|}{6}=\frac{|\ell_i|}{3}$.
\end{proof}

\Cref{lem:ms_det} follows from combining \Cref{lem:alg_3_1} and \Cref{lem:alg_3_2}.

\medskip

We note that \Cref{lem:ms_det} is quite brittle with the makespan of $T/2$. One may hope it is possible to schedule a smaller fraction of jobs in a smaller makespan and have a smooth trade-off here, which would necessitate the processing time used to define the edges in $E_S$ be smaller. However if we partition the edges in $E$ into more subsets, or if we make the threshold processing time between $E_L$ and $E_S$ smaller, we lose the guarantee that at most one edge in $E \setminus E_S$ can be integrally scheduled per machine. Overall, it's not clear how to make this trade-off worth it in the analysis so that a higher fraction of jobs can be scheduled.

\subsection{The Combined Algorithm}
\label{sec:alg_combined}
In this subsection, we show how to combine the 3 algorithms to prove \Cref{thm:bicrit-MS}. The final algorithm outputs the better one of the following two schedules:

\newcounter{schedulings}
\addtocounter{schedulings}{1}

\begin{enumerate}[label=\textbf{(S\arabic{schedulings})}]
    \item \label{sch:1} The schedule output by \Cref{alg:1}.
    \addtocounter{schedulings}{1}
    
    \item \label{sch:2} The schedule obtained by first running \Cref{alg:3}, then \Cref{alg:2} on the remaining jobs.
    \addtocounter{schedulings}{1}
\addtocounter{schedulings}{1}
\end{enumerate}

\begin{proof}[Proof of \Cref{thm:bicrit-MS}]
    \Cref{sch:1} schedules $m^{\star}$ jobs within makespan $T$.
    
    We now analyze \Cref{sch:2}. By \Cref{lem:ms_det}, \Cref{alg:3} schedules at least $\frac{1}{6}(n-m^{\star})$ many jobs within makespan $\frac{T}{2}$. Recall the guarantee that there is a way to schedule all $n$ jobs within makespan $T$, thus it's also possible for the remaining $n-\frac{1}{6}(n-m^{\star})$ many jobs to fit into makespan $T$. By \Cref{lem:alg_2}, \Cref{alg:2} schedules at least a $(1-\frac{1}{e})$ fraction of them in expectation.

\medskip

    Combining \Cref{sch:1} and \Cref{sch:2}, we
schedule\footnote{
A natural question is whether one could use the 2-approximation of Lenstra, Shmoys, and Tardos (or a subroutine of it) to schedule a large set of jobs in makespan $T/2$, then combine this with Algorithm \ref{alg:2}. We found that such a procedure might do better for some $m^{\star}$ values, but it does not help the worst-case.} \[\max\left(m^{\star},
    \frac{1}{6}\left(n-m^{\star}\right)+\bigg(1-\frac{1}{e}\bigg)\bigg(n-\frac{1}{6}\big(n-m^{\star}\big)\bigg)
    \right)\] many jobs within makespan $\frac{3}{2}T$. The minimum is attained when $m^{\star}=\frac{6e-5}{6e+1}\cdot n$. 
\end{proof}

\section{Hardness for $(1-\delta,1-\varepsilon)$-bicriteria \probsc{}}
\label{sec:hardness_sc}

In this section, we prove \Cref{thm:hardness_sc_2}.

\hardscc*

We reduce from the following hardness of $(\frac{\beta}{\alpha},1-\varepsilon)$-bicriteria \probsp{}.

\hardsp*

We first prove \Cref{thm:hardness_sc_2} given \Cref{thm:hardness_sp_2}. The proof of \Cref{thm:hardness_sp_2} is in \Cref{sec:hardness_sp}.

\begin{proof}[Proof of \Cref{thm:hardness_sc_2}]
    We set $\delta= \alpha-\beta$, and reduce a \probsp{} instance as in \Cref{thm:hardness_sp_2} to a \probsc{} instance as follows.

    \begin{itemize}
        \item Set up $n$ agents which correspond to the $n$ sets.
        \item Set up $|U|+(1-\alpha)\cdot n$ items, where $|U|$ of them are normal items corresponding to the elements in the universe, and the remaining $(1-\alpha)\cdot n$ items are dummy items.
        \item For an agent $i$ that corresponds to a set $S_i$ and a normal item $j$ that corresponds to an element $j \in U$, we have $p_{i,j} = T/|S_i|$ if $j \in S_i$ and 0 otherwise. 
        For any agent $i$ and any dummy item $j$, we set $p_{i,j}=T$.
    \end{itemize}

    For completeness, we can let the $m=\alpha \cdot n$ agents that correspond to the $m$ non-intersecting sets take their interested items, and distribute a dummy item to each of the remaining $(1-\alpha)\cdot n$ agents.

    For soundness, suppose there are $(1-\delta) \cdot n = (1-\alpha+\beta)\cdot n$ many agents who get happiness at least $(1-\varepsilon) \cdot T$. Among them at most $(1-\alpha)\cdot n$ agents are satisfied by the dummy items, so at least $m'=\beta \cdot n$ agents get happiness $(1-\varepsilon) \cdot T$ by only picking normal items. This gives us a solution to the $(\frac{\beta}{\alpha},1-\varepsilon)$-bicriteria \probsp{}.
\end{proof}
\section{An Algorithm for $(1-\delta,\varepsilon)$-bicriteria \probsp{}}
\label{sec:alg_sp}

In this section, we prove \Cref{thm:alg_sp}. 

\algsp*


We first introduce the following version of a Chernoff Bound.

\begin{lemma}[Multiplicative Chernoff Bound]\label{lem:chernoff}
    Suppose $x_1,\ldots,x_n$ are independent random variables taking values in $\{0,1\}$. Let $X$ denote their sum and let $\mu =\mathbb E[X]$ denote the sum's expected value. Then for any $\gamma>0$, 
    $\Pr\left[X \ge (1+\gamma) \mu\right] \le \exp{\left(-\frac{\gamma^2 \mu}{2+\gamma}\right)},$
    and for any $0 <\gamma<1$,
    $\Pr\left[X \le (1-\gamma) \mu\right] \le \exp{\left(-\gamma^2 \mu/2\right)}.$
\end{lemma}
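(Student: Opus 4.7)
The plan is to use the standard Chernoff/Bernstein method: apply Markov's inequality to the exponential moment generating function of $X$, exploit independence to factorize, bound each factor, and then optimize the free parameter.

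For the upper tail, I would fix $t>0$ and write
\[
\Pr[X\ge (1+\gamma)\mu] = \Pr[e^{tX}\ge e^{t(1+\gamma)\mu}] \le \frac{\mathbb{E}[e^{tX}]}{e^{t(1+\gamma)\mu}}
\]
by Markov. Setting $p_i=\Pr[x_i=1]$, independence gives $\mathbb{E}[e^{tX}]=\prod_i(1+p_i(e^t-1))$, and using $1+z\le e^{z}$ termwise yields $\mathbb{E}[e^{tX}]\le \exp(\mu(e^t-1))$ since $\sum_i p_i=\mu$. Optimizing in $t$ (the minimum is at $t=\ln(1+\gamma)$) gives the textbook estimate
\[
\Pr[X\ge (1+\gamma)\mu]\le \left(\frac{e^{\gamma}}{(1+\gamma)^{1+\gamma}}\right)^{\mu}.
\]
To reach the stated form, I would then show the elementary inequality $(1+\gamma)\ln(1+\gamma)-\gamma \ge \gamma^2/(2+\gamma)$ for all $\gamma>0$, which follows from comparing derivatives of the two sides at $\gamma=0$ (both sides vanish there) or from the integral identity $(1+\gamma)\ln(1+\gamma)-\gamma=\int_0^{\gamma}\frac{s}{1+s}\,ds$ combined with the bound $\frac{s}{1+s}\ge \frac{2s}{(2+\gamma)^2}\cdot (2+\gamma - s/\text{something})$; more cleanly, verify that the function $f(\gamma)=(1+\gamma)\ln(1+\gamma)-\gamma-\gamma^2/(2+\gamma)$ satisfies $f(0)=0$ and $f'(\gamma)\ge 0$ on $[0,\infty)$.

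The lower tail is symmetric. For $t>0$,
\[
\Pr[X\le (1-\gamma)\mu]=\Pr[e^{-tX}\ge e^{-t(1-\gamma)\mu}]\le \frac{\mathbb{E}[e^{-tX}]}{e^{-t(1-\gamma)\mu}}\le \frac{\exp(\mu(e^{-t}-1))}{e^{-t(1-\gamma)\mu}},
\]
and setting $t=-\ln(1-\gamma)>0$ yields
\[
\Pr[X\le(1-\gamma)\mu]\le \left(\frac{e^{-\gamma}}{(1-\gamma)^{1-\gamma}}\right)^{\mu}.
\]
Taking logarithms, the desired bound reduces to $(1-\gamma)\ln(1-\gamma)+\gamma\ge \gamma^2/2$ for $0<\gamma<1$, which I would prove by expanding $\ln(1-\gamma)=-\sum_{k\ge 1}\gamma^k/k$ and observing that the left-hand side equals $\sum_{k\ge 2}\gamma^k/(k(k-1))$, each of whose terms is nonnegative and whose $k=2$ term already gives $\gamma^2/2$.

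The main (and only mild) obstacle is the analytic step comparing $(1\pm\gamma)\ln(1\pm\gamma)$ against the claimed rational functions of $\gamma$; everything else is a routine application of Markov, independence, and the elementary inequality $1+z\le e^{z}$. I would present the MGF computation once and then handle both tails in parallel, deferring the two calculus lemmas to a short appendix or inlining them as one-line series manipulations.
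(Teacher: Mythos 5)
This is a standard textbook fact that the paper states without proof, so there is no in-paper argument to compare against; your proposal is the classical and correct derivation (Markov on the moment generating function, independence plus $1+z\le e^z$ to get $\exp(\mu(e^t-1))$, optimize $t$, then reduce to the two calculus inequalities $(1+\gamma)\ln(1+\gamma)-\gamma\ge \gamma^2/(2+\gamma)$ and $(1-\gamma)\ln(1-\gamma)+\gamma\ge\gamma^2/2$). Your series argument for the lower-tail inequality is clean and complete, since $(1-\gamma)\ln(1-\gamma)+\gamma=\sum_{k\ge2}\gamma^k/(k(k-1))$ with nonnegative terms. One small correction on the upper-tail inequality: for $f(\gamma)=(1+\gamma)\ln(1+\gamma)-\gamma-\gamma^2/(2+\gamma)$ you have $f(0)=f'(0)=0$, so you cannot just ``verify $f'\ge0$'' by inspection at the origin; you need one more derivative, namely $f''(\gamma)=\frac{1}{1+\gamma}-\frac{8}{(2+\gamma)^3}\ge0$ for $\gamma\ge0$ (equivalent to $(2+\gamma)^3\ge 8(1+\gamma)$), which then gives $f'\ge0$ and $f\ge0$. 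The half-finished integral manipulation in the middle of that paragraph should be deleted in favor of this.
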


 Fix any $\delta>0$, we set  constants $D=40/\delta$, $C=400/\delta^2$, and $\varepsilon=\delta^2/800$.
We then divide the sets into two groups based on whether their sizes exceed $C$. Specifically, let
$
    \mathcal S_{s}  = \{S \in \mathcal S \mid |S| \le C\}$ and $
    \mathcal S_{b}  = \{S \in \mathcal S \mid |S| > C\}.
$
Note that as $\varepsilon<1/C$, for a set in $\mathcal S_s$, picking 1 element suffices to represent an $\varepsilon$ fraction of it. Therefore, we first allocate 1 element to as many sets in $\mathcal S_s$ as possible using a Max-flow algorithm (\Cref{alg:sp_1}), then run an LP-rounding based algorithm (\Cref{alg:sp_2}) on the remaining instance. See the following pseudo-codes for the description of the algorithms. Throughout this section, when we consider some subproblem $(\mathcal S',U')$, we mean one is only allowed to pick sets in $\mathcal S'$ that are \textit{fully contained} in $U'$.

\begin{algorithm}
\caption{A Polynomial-time Randomized Algorithm for $(1-\delta,\varepsilon)$-\probsp{}}
\label{alg:sp_all}
\begin{algorithmic}
\State \textbf{Input}: a \probsp{} instance $\Gamma=(\mathcal S,U)$, where $\mathcal S= \mathcal S_s \dot\cup \mathcal S_b$.
\State Run \Cref{alg:sp_1} on $(\mathcal S_s,U)$ and let $(\mathcal T_s,U_s)$ be the output.
\State Run \Cref{alg:sp_2} on $(\mathcal S_b, U \setminus U_s)$ and let $(\mathcal T_b,U_b)$ be the output.
\State \textbf{Output}: $\mathcal T_s \cup \mathcal T_b$.
\end{algorithmic}
\end{algorithm}

\begin{algorithm}
\caption{An Algorithm for the Small-sized Sets}
\label{alg:sp_1}
\begin{algorithmic}
\State \textbf{Input}: A collection of sets $\mathcal S$ over the universe $U$, where every set has size $\le C$.
\State Build the network $G$ containing a source $s$, a sink $t$, a node for every $S \in \mathcal S$, and a node for every $u \in U$.
\State Add edge $(s,S)$ with capacity 1 to $G$ for every $S \in \mathcal S$.
\State Add edge $(u,t)$ with capacity 1 to $G$ for every $u \in U$.
\State Add edge $(S,u)$ with capacity 1 to $G$ for every $u \in S$.
\State Run a Max-flow algorithm on $G$.
\State Let $\mathcal T$ be the collection of sets $S$ with edge $(s,S)$ saturated.
\State Let $U'$ be the collection of elements $u$ with edge $(u,t)$ saturated.
\State \textbf{Output}: $(\mathcal T,U')$.
\end{algorithmic}
\end{algorithm}

The following lemma shows the optimality of working in two phases as in \Cref{alg:sp_all}. For a \probsp{} instance $\Gamma=(\mathcal S,U)$, we use $\text{OPT}(\Gamma)$ to denote the maximum number of disjoint sets in it.

\begin{lemma}\label{lem:two-phase}
    Let $\Gamma=(\mathcal S,U)$ be a \probsp{} instance. For any $U'\subseteq U$, we have $|U'|+\text{OPT}(\mathcal S,U \setminus U')\ge \text{OPT}(\mathcal S,U)$.
\end{lemma}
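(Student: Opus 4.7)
The plan is to take an optimal solution for the full instance $(\mathcal S,U)$ and split its sets according to whether they meet the distinguished subset $U'$. Specifically, let $\mathcal T^\star \subseteq \mathcal S$ be a collection of pairwise disjoint sets, contained in $U$, with $|\mathcal T^\star| = \text{OPT}(\mathcal S, U)$. I would partition $\mathcal T^\star$ into $\mathcal T^\star_{\text{hit}} = \{S \in \mathcal T^\star : S \cap U' \ne \emptyset\}$ and $\mathcal T^\star_{\text{miss}} = \{S \in \mathcal T^\star : S \cap U' = \emptyset\}$.

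The two observations to make are straightforward. First, because the sets in $\mathcal T^\star$ are pairwise disjoint, each set in $\mathcal T^\star_{\text{hit}}$ claims at least one distinct element of $U'$, so $|\mathcal T^\star_{\text{hit}}| \le |U'|$. Second, every set $S \in \mathcal T^\star_{\text{miss}}$ is fully contained in $U \setminus U'$ (by definition) and the sets in $\mathcal T^\star_{\text{miss}}$ remain pairwise disjoint, so $\mathcal T^\star_{\text{miss}}$ is a feasible solution to the subproblem $(\mathcal S, U \setminus U')$. Therefore $\text{OPT}(\mathcal S, U \setminus U') \ge |\mathcal T^\star_{\text{miss}}| = |\mathcal T^\star| - |\mathcal T^\star_{\text{hit}}| \ge \text{OPT}(\mathcal S, U) - |U'|$, and rearranging yields the claim.

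There is essentially no obstacle here; the statement is purely structural and the only thing to be careful about is the convention (stated just above the lemma in the paper) that a subproblem $(\mathcal S', U')$ only allows picking sets \emph{fully contained} in $U'$, which is exactly what ensures $\mathcal T^\star_{\text{miss}}$ is legal for $(\mathcal S, U \setminus U')$. I would spell that sentence out so that a reader does not worry about sets that straddle $U'$ and $U \setminus U'$ being silently used in the residual instance.
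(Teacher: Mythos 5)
Your proof is correct and uses essentially the same idea as the paper: disjointness of the optimal solution means each element of $U'$ can ``kill'' at most one set, so deleting $U'$ costs at most $|U'|$ sets. You phrase it as a single partition into hit/miss sets while the paper removes elements of $U'$ one at a time, but these are the same counting argument, and your explicit note about the fully-contained convention is a fair thing to spell out.
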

\begin{proof}
    Starting with the optimal solution for $(\mathcal S,U)$, removing an element from $U$ only affects at most one set since the picked sets are disjoint. Doing this for every element in $U'$ finishes the proof.
\end{proof}

Let $(\mathcal T_s,U_s)$ be the output of \Cref{alg:sp_1}. Clearly $|\mathcal T_s|=|U_s|$. By \Cref{lem:two-phase}, we have $|\mathcal T_s|+\text{OPT}(\mathcal S,U \setminus U_s)\ge \text{OPT}(\mathcal S,U)$. By the optimality of \Cref{alg:sp_1}, every set in $\mathcal S_s$ must intersect $U_s$, thus $\text{OPT}(\mathcal S,U \setminus U_s)=\text{OPT}(\mathcal S_b,U \setminus U_s)$.
Note that \Cref{alg:sp_2} will solve the following LP:

\begin{equation}\label{eq:lp2}
    \begin{aligned}
        \text{maximize} \quad  \sum_{S} x_S \qquad 
        \text{s.t.} \quad    \sum_{S \ni u} x_S \le 1 \quad  \forall u \in U,
       \quad  x_S \in [0,1]  \quad  \forall S \in \mathcal S.
    \end{aligned}
\end{equation}

\begin{algorithm}
\caption{An Algorithm for the Large-sized Sets}
\label{alg:sp_2}
\begin{algorithmic}
\State \textbf{Input}: A collection of sets $\mathcal S$ over the universe $U$, where every set has size $> C$. Parameter $D$.
\State Solve LP (\ref{eq:lp2}) and let $x$ be an optimal solution.
\State Initialize $\mathcal A = \emptyset$.
\State Add $S$ to $\mathcal A$ with probability $x_S$ for all $S \in \mathcal{S}$.
\State Let $T = \{ u \in U \mid |\{S \in \mathcal A \mid S \ni u\}| \ge D\}$, i.e., the elements that are covered $\ge D$ times by $\mathcal A$.
\State Let $\mathcal B = \{S \in \mathcal A \mid |S \cap T| < 0.1\cdot S\}$, i.e., the sets in $\mathcal{A}$ that only contain few elements in $T$.
\State Allocate each $u \in U \setminus T$ to a set in $\mathcal B$ that contains it independently at random.
\State Let $\mathcal C = \{S \in \mathcal B \mid \text{$S$ collects $\ge \varepsilon\cdot |S \cap T|$ many elements}\}$.
\State Let $U'$ be the elements collected by sets in $\mathcal C$.
\State \textbf{Output}: $(\mathcal C,U')$.
\end{algorithmic}
\end{algorithm}

\begin{proof}[Proof of \Cref{thm:alg_sp}]
    Given the promise that there are $m$ disjoint sets in the \probsp{} instance, we have $\sum_{S} x_S \ge m$ where $\{x_S\}$ is an optimal solution to LP (\ref{eq:lp2}). Let $y_S$ be the 0-1 random variable indicating whether or not $S$ is picked into $\mathcal A$, where $y_S=1$ with probability $x_S$.

    Fix any element $u \in U$, we have
    $\mathbb E[\sum_{S \ni u} y_S]=\sum_{S \ni u} x_S \le 1$, i.e., in expectation it will be covered at most once. Consider the probability that it is covered at least $D$ times, for some large enough constant $D=D(\delta)$. By the Chernoff bound (\Cref{lem:chernoff}) ,
    $$
    \begin{aligned}
        \Pr\left[\sum_{S \ni u} y_S \ge D\right] \le & \exp\left(-\frac{(D-1)^2}{2+(D-1)}\right) & \quad{\text{(\Cref{lem:chernoff} with $\gamma=D-1$ and $\mu=1$)}} \\
        \le & \delta/20. & \quad{\text{(Since $D=40/\delta$)}}
    \end{aligned}$$
    Recall that $T=\{u \mid \sum_{S \ni u} y_S \ge D\}$. Thus for any fixed $u$, we have $\Pr[u \in T] \le \delta/20$.

    We bound the probability that a set is in $\mathcal A \setminus \mathcal B$, i.e., is discarded because of containing too many ``popular'' elements. For any fixed set $S$, we have
    $
        \mathbb E[|S \cap T|]  \le \delta/20 \cdot |S|$, by linearity of expectation and
        $\Pr[ |S \cap T|  \ge 0.1 \cdot |S|] \le \delta/2.$ by Markov's inequality. 
    Therefore,
    $$
    \begin{aligned}
        \mathbb E[| \mathcal B|] & = \sum_{S \in \mathcal S} \Pr[S \in \mathcal A] \cdot \Pr[S \in B \mid S \in \mathcal A] \\
        & = \sum_{S \in \mathcal S} \Pr[S \in \mathcal A]\cdot \Pr[|S \cap T| < 0.1 \cdot |S|\mid S \in \mathcal A]\\
        & \ge (1-\delta/2)\cdot \sum_{S \in \mathcal S} \Pr[S \in \mathcal A] \\
        & =(1-\delta/2)\cdot \mathbb E\left[|\mathcal A|\right],  
    \end{aligned}
    $$
    which means in expectation, 
    we do not lose too much by discarding the popular elements and only focusing on the sampled sets that contain mostly non-popular elements.

    From now on we only focus on the non-popular elements $U \setminus T$, i.e., the elements which are covered no more than $D$ times. We argue that by allocating each element $u \in U \setminus T$ to a random set $S\in \mathcal B$ that contains it, most sets will collect a reasonable fraction of their elements.

    For a fixed set $S\in \mathcal B$ and an element $u \in S \setminus T$,  since $u$ is covered $\le D$ times we have
    $$
    \begin{aligned}
        \Pr[u \text{ is allocated to } S] \ge 1/D.
    \end{aligned}
    $$
    Summing this over $S \setminus T$, we use linearity of expectation
    $$
    \begin{aligned}
        \mathbb E[\text{the number of elements $S$ collects}] \ge (1/D) \cdot |S \setminus T|.
    \end{aligned}
    $$
    Recall that our input sets have size $>C$. Fix a set $S \in \mathcal B$, we have $|S \setminus T| \ge 0.9 \cdot |S| \ge 0.9\cdot C$. Again using \Cref{lem:chernoff} with $\gamma=1-\varepsilon\cdot D$ and $\mu =(1/D) \cdot |S \setminus T| \ge 0.9\cdot C/D$, 
    \begin{align*}
         \Pr[\text{$S$ collects $<\varepsilon \cdot |S\setminus T|$ elements}]  \le & \exp\left(-\frac{(1-\varepsilon \cdot D)^2\cdot 0.9\cdot C/D}{2}\right)  
        \\
        \le & \exp\left(-\frac{0.5\cdot 0.9\cdot C/D}{2}\right) 
        \\
        < & \delta/2. 
    \end{align*}
    The second to last inequality follows due to the fact that  $(1-\varepsilon\cdot D)^2\ge 0.5$, and the last inequality since $C/D\ge 10/\delta$.
    Therefore,
    $$
    \begin{aligned}
    \mathbb E[|\mathcal C|] & = \sum_{S \in \mathcal S} \Pr[S \in \mathcal B] \cdot \Pr[S \in \mathcal C \mid S \in \mathcal B] \\  & = \sum_{S \in \mathcal S} \Pr[S \in \mathcal B] \cdot \Pr[\text{$S$ collects $\ge \varepsilon\cdot |S\setminus T|$ many elements} \mid S \in \mathcal B]\\
    & \ge (1-\delta/2)\cdot \sum_{S \in \mathcal S} \Pr[S \in \mathcal B] 
     = (1-\delta/2) \cdot \mathbb E[|\mathcal B|]\\
    & \ge (1-\delta/2)^2 \cdot \mathbb E[|\mathcal A|] 
     = (1-\delta/2)^2 \cdot \sum_{S \in \mathcal S} x_S 
     \ge (1-\delta) \cdot m,
    \end{aligned}
    $$
    which means in expectation, at least $(1-\delta) \cdot m$ many sets $S$ get at least $\varepsilon\cdot |S \setminus T|\ge 0.9\cdot \varepsilon\cdot |S|$ many elements.
\end{proof}

\section{Hardness of $(1-\delta,1-\varepsilon)$-bicriteria \probsp{}}
\label{sec:hardness_sp}

In this section, we prove \Cref{thm:hardness_sp_2}. 

\hardsp*

To ease the notation, we refer to the property of sets $S_1,\ldots,S_{m'}$ in \Cref{thm:hardness_sp_2} as \emph{$\varepsilon$-almost disjoint}. Specifically, $S_1,\ldots,S_{m'}$ are $\varepsilon$-almost disjoint if and only if there exist sets $A_1\subseteq S_1,\ldots,A_{m'} \subseteq S_{m'}$, such that for every $i \in [m'], |A_i| \ge (1-\varepsilon) \cdot |S_i|$, and $A_1,\ldots,A_{m'}$ are disjoint.

We use the canonical reduction from $q$-\probcsp{} to \probsp{} \cite{HSS06, LST24}, while replacing the gadget in \cite{LST24} with Feige's hypercube partition system \cite{Fei98}, and plugging in Feige's hardness of degree-5 3-CNF \cite{Fei98}. We first give the definition of \probcsp{}, then show the reduction from \probcsp{} to \probsp{} and give the analysis.

\begin{definition}[CSP]
\label{def:csp}
    A $q$-\probcsp{} instance $\Pi=(G=(V,E),\Sigma,C=\{c_e\}_{e \in E})$ consists of $n=|V|$ variables and $m=|E|$ constraints, where each constraint $e\in E$ is on $q$ variables and is defined by the function $c_e:\Sigma^q \to \{0,1\}$. An assignment is a function from $V$ to $\Sigma$. The value of $\Pi$ is the maximum fraction of constraints that can be satisfied by some assignment.
\end{definition}

\paragraph{Reduction.}

Given a $d$-regular $q$-\probcsp{} instance over alphabet $\Sigma$, we construct a hypergraph gadget $G^{(x)}=(V^{(x)},E^{(x)})$ for each variable $x$ with the following properties:
\begin{enumerate}
    \item $E^{(x)}$ is partitioned into $|\Sigma|$ matchings $E^{(x)}_1,\ldots,E^{(x)}_{|\Sigma|}$ with $|E^{(x)}_1|=\ldots = |E^{(x)}_{|\Sigma|}| = d$, where $E^{(x)}_i = \{e^{(x)}_{i,j}\}_{j \in [d]}$.
    
    \item Any two hyperedges from two different matchings, i.e., any $e^{(x)}_{i_1,j_1}$ and $e^{(x)}_{i_2,j_2}$ for $i_1\neq i_2$ and $j_1,j_2 \in [d]$, must intersect in a reasonable fraction of their sizes.
\end{enumerate}

With this gadget, the only way to pick almost disjoint sets is to fix some $v \in \Sigma$, and take sets (a.k.a. hyperedges) from the matching $E^{(x)}_{v}$. This corresponds to the case where variable $x$ is assigned value $v$ in the \probcsp{} instance. For $j \in [d]$, the $j$-th hyperedge in $E^{(x)}_i$ corresponds to the $j$-th constraint involving $x$, assuming there is an arbitrary order of all constraints. In the final construction, we consider each constraint $\varphi$ on variables $(x_1,\ldots,x_q)$ and a satisfying assignment $(v_1,\ldots,v_q) \in \Sigma^q$, and build a set $S$ for them. Assuming $\varphi$ is the $j_i$-th constraint for variable $x_i$, the set $S$ is constructed to be
$$S=\bigcup_{i\in[q]} e^{(x_i)}_{v_i, j_i}.$$

Suppose the \probcsp{} instance has $m$ constraints and the average number of satisfying assignments for each constraints is $\eta$, then the total number of constructed sets is $\eta \cdot m$. 

\paragraph{Completeness.}
Suppose there is an assignment $\sigma$ that satisfies all constraints, we can for each constraint $\varphi$, take the set corresponding to $\sigma$ restricted on the variables involved in $\varphi$. By looking into each variable gadget, it's easy to see those sets are non-intersecting. Thus we can take $m$ non-intersecting sets out of the $\eta \cdot m$ sets.

\paragraph{Soundness.}
Suppose any assignment satisfies at most $\mu$ fraction of constraints. Then we prove one can take at most $\mu \cdot m$ many $\varepsilon$-almost non-intersecting sets, for $\varepsilon < \frac{\gamma}{2q}$ where $\gamma$ is the fraction of intersection of two hyperedges from two different matchings in the gadget.

Note that in order to be $\varepsilon$-almost non-intersecting, each set can drop at most an $\varepsilon$ fraction of elements. However, any two sets that correspond to inconsistent assignments share $\frac{\gamma}{q}>2\varepsilon$ fraction of elements. Therefore, the only way to pick $\varepsilon$-almost non-intersecting sets is to fix some global assignment $\sigma$ of the \probcsp{} instance, and pick sets that correspond to constraints satisfied by $\sigma$. Given $\sigma$ satisfies at most $\mu$ fraction of constraints, we can take at most $\mu \cdot m$ $\varepsilon$-almost non-intersecting sets in total.

\paragraph{Construction of the gadget.}
We use Feige's hypercube partition system \cite{Fei98} as the gadget. Specifically, the gadget $G^{(x)}=(V^{(x)},E^{(x)})$ is constructed as follows:

\begin{itemize}
    \item $V^{(x)} = [d]^{\Sigma}$. We can think of $V^{(x)}$ as $|\Sigma|$-long vectors, where each entry takes a value in $[d]$.
    \item $e^{(x)}_{i,j} = \{a \in [d]^{\Sigma} \mid a_{i} = j\}$, i.e., vectors whose $i$-th entry is $j$.
\end{itemize}

It is easy to see each $E^{(x)}_i = \{e^{(x)}_{i,j}\}_{j \in [d]}$ is a matching. Furthermore, $|e^{(x)}_{i,j}| = d^{|\Sigma|-1}$, and for $i_1\neq i_2$ and $j_1,j_2\in [d]$, $|e^{(x)}_{i_1,j_1} \cap e^{(x)}_{i_2,j_2}| = d^{|\Sigma|-2}$. In other words, two hyperedges from two different matchings intersect in $\frac{1}{d}$ of their sizes.

\medskip

\paragraph{Choice of Parameters.}

We use Feige's degree-5 3-SAT hardness \cite{Fei98} as a starting point.

\begin{lemma}[\cite{Fei98}]\label{lem:feige_3sat_hardness}
    There is a universal constant $\varepsilon_0>0$ such that it is NP-hard to distinguish between the case where a 3-CNF formula $\varphi$ is fully satisfiable, and the case where any assignment can satisfy at most $1-\varepsilon_0$ fraction of clauses. Furthermore, this holds even when each clause has length exactly 3 and each variable appears exactly 5 times.
\end{lemma}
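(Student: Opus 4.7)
The plan is to prove \Cref{lem:feige_3sat_hardness} by starting from the PCP theorem's MAX-3SAT hardness and applying two structure-preserving transformations: first enforcing exact clause length 3, then regularizing each variable's number of occurrences to exactly 5 via an expander-based replacement. The starting point is that there is some absolute $\varepsilon_1 > 0$ for which distinguishing satisfiable 3-CNFs from those in which at most a $(1-\varepsilon_1)$-fraction of clauses are simultaneously satisfiable is NP-hard. Clauses shorter than 3 literals are padded by duplicating a literal or introducing an isolated fresh variable, and clauses longer than 3 (if present) are split by the standard trick $(\ell_1 \lor \cdots \lor \ell_k) \mapsto (\ell_1 \lor \ell_2 \lor y) \land (\neg y \lor \ell_3 \lor \cdots \lor \ell_k)$; together these yield exact clause length 3 with only a constant-factor blowup in the number of clauses, so the gap is preserved up to constants.

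To regularize variable degree, replace each variable $x$ appearing $k_x$ times by $k_x$ fresh copies $x^{(1)}, \ldots, x^{(k_x)}$, assign one copy to each original occurrence of $x$, and add equality constraints $x^{(i)} = x^{(j)}$ for every edge $(i,j)$ of a constant-degree explicit expander graph $H_x$ on vertex set $[k_x]$. Each equality is written as the two clauses $(\neg x^{(i)} \lor x^{(j)}) \land (x^{(i)} \lor \neg x^{(j)})$, padded to length exactly 3 as above. The expander degree is chosen so that, combined with the single original occurrence of each copy, every copy appears in exactly 5 clauses; any auxiliary padding literals introduced during length-3 normalization are themselves controlled by a secondary padding step (for example, by grouping them into their own dummy equality clauses) so that every variable in the final formula has degree exactly 5. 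Completeness is immediate: a satisfying assignment of the original formula extends to one of the new formula by giving all copies of $x$ the same Boolean value.

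For soundness, given any assignment to the new formula, round each original variable $x$ to the majority value of its copies. Edge-expansion of $H_x$ guarantees that whenever a $\beta$-fraction of the copies of $x$ disagree with the majority, an $\Omega(\beta)$-fraction of $H_x$'s edges cross the induced cut, and each such edge contributes at least one violated equality clause. Summing over all variables, if the assignment to the new formula satisfies at least a $(1-\varepsilon_0)$-fraction of clauses for a sufficiently small $\varepsilon_0$, the rounded assignment satisfies at least a $(1-\varepsilon_1)$-fraction of the original clauses, contradicting MAX-3SAT hardness. The main obstacle is the joint bookkeeping: choosing the expander degree, the padding scheme, and the handling of auxiliary variables so that \emph{every} variable (original copies and padding variables alike) appears in exactly five length-3 clauses, while ensuring the expander's spectral gap is strong enough to make the rounding loss linear in $\varepsilon_0$. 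Explicit constant-degree expanders (e.g., Margulis or Ramanujan graphs) meet the required expansion, and once the parameters are set the reduction runs in polynomial time, yielding a universal $\varepsilon_0 > 0$.
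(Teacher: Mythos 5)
First, note that the paper does not actually prove this statement: it is imported verbatim as a black-box citation to Feige's work \cite{Fei98} (where it appears as the NP-hardness of gap-3SAT-5), and all the paper does with it is plug the constant $5$ into the parameter list of \Cref{sec:hardness_sp}. So your attempt is a from-scratch reconstruction of a cited result, and it must be judged on its own merits.

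Your template (PCP-based gap-3SAT, length normalization, expander-based copy-and-equate, majority rounding for soundness) is the standard and correct route to \emph{bounded-occurrence} gap-3SAT, and the soundness argument via edge expansion is fine in principle. The genuine gap is exactly the bookkeeping you flag and then wave away. Count occurrences of a copy $x^{(i)}$: it appears once in its original clause, and each incident expander edge $(i,j)$ contributes the two clauses $(\neg x^{(i)} \lor x^{(j)})$ and $(x^{(i)} \lor \neg x^{(j)})$, i.e.\ \emph{two} further occurrences of $x^{(i)}$ per edge. A $d$-regular $H_x$ therefore gives each copy $2d+1$ occurrences, and forcing $2d+1=5$ forces $d=2$; but a $2$-regular graph is a disjoint union of cycles, which have edge expansion $O(1/k_x)$, so the key soundness step ("a $\beta$-fraction of disagreeing copies forces an $\Omega(\beta)$-fraction of cut edges") fails completely -- on a cycle one can flip half the copies while violating only two equality clauses. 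Matters get worse, not better, when you pad the length-$2$ equality clauses up to length $3$: padding with a repeated literal adds yet another occurrence of $x^{(i)}$ or $x^{(j)}$ (and produces clauses without three distinct variables, which the gadget in \Cref{sec:hardness_sp} implicitly needs), while padding with fresh dummies creates new variables whose degree must itself be driven to exactly $5$, and your "secondary padding step" is not specified in a way that terminates with the right counts. So as written the construction cannot simultaneously be an expander and hit the constant $5$. A correct proof must either settle for "exactly $B$ occurrences" for some larger universal constant $B$ determined by an explicit expander degree (which would suffice for this paper after changing $d=5$ to $d=B$ in the parameter list), or use the more careful gadgetry that Feige actually employs to reach $5$; you should either do one of these explicitly or state and cite the result as the paper does.
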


Plugging everything into the reduction, we have:
\begin{itemize}
    \item $q=3$.
    \item $|\Sigma|=2$.
    \item $d=5$.
    \item $\eta = 7$, since each clause has exactly 7 satisfying assignments.
    \item $\gamma = \frac{1}{d} = \frac{1}{5}$.
    \item $\varepsilon <\frac{\gamma}{2q} = \frac{1}{30}$.
    \item $\alpha = \frac{1}{\eta} = \frac{1}{7}$.
    \item $\beta = \frac{1-\varepsilon_0}{\eta} = \frac{1}{7}(1-\varepsilon_0)$.
    \item $|\mathcal S| = \eta \cdot m = 7 \cdot m$.
    \item Every $S_i \in \mathcal S$ has size $k=q \cdot d^{|\Sigma|-1}= 15$.
    \item $|U|=k \cdot m=15 \cdot m$.
\end{itemize}

This finishes the proof of \Cref{thm:hardness_sp_2}.


\section{Conclusions}\label{sec:conclude}

In this paper, we studied the bicriteria versions of several important scheduling problems, namely \probms{}, \probsc{}, and \probsp{}.

For \probms{}, we provided a bicriteria approximation algorithm: we can schedule more than 0.6533 fraction of the jobs within makespan $(3/2)\cdot T$, where $T$ is the optimal makespan to schedule all jobs.
We suggest as future work to prove hardness or algorithmic results
for scheduling a $(1-1/e+\varepsilon)$ fraction of jobs in makespan $(3/2-\delta)\cdot T$, for $\varepsilon>0.02$ and $\delta$ being a small constant.
Another trade-off point that would be interesting future work is scheduling a $(1-\varepsilon)$ fraction of jobs in makespan $(2-\delta)\cdot T$, for $\varepsilon,\delta$ small constants.

For \probsc{}, we provided hardness results with a constant fraction of agent rejections. An interesting open question would be closing the $\varepsilon$ vs. $(1-\varepsilon)$ gap between the algorithm in \cite{golovin2005max} and the hardness from us on the value each agent receives.

As a by-product, for \probsp{}, we provided the hardness for $(1-\delta_0,1-\varepsilon)$-bicriteria for some $\delta_0,\varepsilon>0$, and gave an algorithm for $(1-\delta,\varepsilon)$-bicriteria where $\delta$ can be arbitrarily close to 0. We conjecture $\varepsilon=1/2$ might be the threshold of going from easy to hard, since when $\varepsilon<1/2$, it's possible for a pair of sets to both pick an $\varepsilon$-fraction so that the subsets are disjoint, even if the two sets are identical. Thus, we suggest as future work to study the case when $\varepsilon$ is around $1/2$.



\bibliographystyle{alpha}
\bibliography{main}

\newcommand{\etalchar}[1]{$^{#1}$}
\begin{thebibliography}{BLMS{\etalchar{+}}00}

\bibitem[ABK01]{angel2001fptas}
Eric Angel, Evripidis Bampis, and Alexander Kononov.
\newblock A fptas for approximating the unrelated parallel machines scheduling problem with costs.
\newblock In {\em European Symposium on Algorithms}, pages 194--205. Springer, 2001.

\bibitem[AKS17]{annamalai2017combinatorial}
Chidambaram Annamalai, Christos Kalaitzis, and Ola Svensson.
\newblock Combinatorial algorithm for restricted max-min fair allocation.
\newblock {\em ACM Transactions on Algorithms (TALG)}, 13(3):1--28, 2017.

\bibitem[AS07]{asadpour2007approximation}
Arash Asadpour and Amin Saberi.
\newblock An approximation algorithm for max-min fair allocation of indivisible goods.
\newblock In {\em Proceedings of the thirty-ninth annual ACM symposium on Theory of computing}, pages 114--121, 2007.

\bibitem[Bam25]{bamas2024lift}
Etienne Bamas.
\newblock Lift-and-project integrality gaps for santa claus.
\newblock {\em SODA}, 2025.

\bibitem[BCG09]{bateni2009maxmin}
MohammadHossein Bateni, Moses Charikar, and Venkatesan Guruswami.
\newblock Maxmin allocation via degree lower-bounded arborescences.
\newblock In {\em Proceedings of the forty-first annual ACM symposium on Theory of computing}, pages 543--552, 2009.

\bibitem[BD05]{bezakova2005allocating}
Ivona Bez{\'a}kov{\'a} and Varsha Dani.
\newblock Allocating indivisible goods.
\newblock {\em ACM SIGecom Exchanges}, 5(3):11--18, 2005.

\bibitem[Ber00]{Ber00}
Piotr Berman.
\newblock A d/2 approximation for maximum weight independent set in d-claw free graphs.
\newblock {\em Nord. J. Comput.}, 7(3):178--184, 2000.

\bibitem[BLM{\etalchar{+}}24]{BLMRS24}
{\'{E}}tienne Bamas, Alexander Lindermayr, Nicole Megow, Lars Rohwedder, and Jens Schl{\"{o}}ter.
\newblock Santa claus meets makespan and matroids: Algorithms and reductions.
\newblock In David~P. Woodruff, editor, {\em Proceedings of the 2024 {ACM-SIAM} Symposium on Discrete Algorithms, {SODA} 2024, Alexandria, VA, USA, January 7-10, 2024}, pages 2829--2860. {SIAM}, 2024.

\bibitem[BLMS{\etalchar{+}}00]{bartal2000multiprocessor}
Yair Bartal, Stefano Leonardi, Alberto Marchetti-Spaccamela, Jiri Sgall, and Leen Stougie.
\newblock Multiprocessor scheduling with rejection.
\newblock {\em SIAM Journal on Discrete Mathematics}, 13(1):64--78, 2000.

\bibitem[BR23]{bamas2023better}
{\'E}tienne Bamas and Lars Rohwedder.
\newblock Better trees for santa claus.
\newblock In {\em Proceedings of the 55th Annual ACM Symposium on Theory of Computing}, pages 1862--1875, 2023.

\bibitem[BS06]{bansal2006santa}
Nikhil Bansal and Maxim Sviridenko.
\newblock The santa claus problem.
\newblock In {\em Proceedings of the thirty-eighth annual ACM symposium on Theory of computing}, pages 31--40, 2006.

\bibitem[CCK09]{DBLP:conf/focs/ChakrabartyCK09}
Deeparnab Chakrabarty, Julia Chuzhoy, and Sanjeev Khanna.
\newblock On allocating goods to maximize fairness.
\newblock In {\em 50th Annual {IEEE} Symposium on Foundations of Computer Science, {FOCS} 2009, October 25-27, 2009, Atlanta, Georgia, {USA}}, pages 107--116. {IEEE} Computer Society, 2009.

\bibitem[CG10]{CG10}
Deeparnab Chakrabarty and Gagan Goel.
\newblock On the approximability of budgeted allocations and improved lower bounds for submodular welfare maximization and {GAP}.
\newblock {\em {SIAM} J. Comput.}, 39(6):2189--2211, 2010.

\bibitem[CK05]{chekuri2005polynomial}
Chandra Chekuri and Sanjeev Khanna.
\newblock A polynomial time approximation scheme for the multiple knapsack problem.
\newblock {\em SIAM Journal on Computing}, 35(3):713--728, 2005.

\bibitem[CM18]{cheng2018restricted}
Siu-Wing Cheng and Yuchen Mao.
\newblock Restricted max-min fair allocation.
\newblock {\em arXiv preprint arXiv:1804.10902}, 2018.

\bibitem[CM19]{cheng2019restricted}
Siu-Wing Cheng and Yuchen Mao.
\newblock Restricted max-min allocation: Approximation and integrality gap.
\newblock {\em arXiv preprint arXiv:1905.06084}, 2019.

\bibitem[Cyg13]{Cyg13}
Marek Cygan.
\newblock Improved approximation for 3-dimensional matching via bounded pathwidth local search.
\newblock In {\em 54th Annual {IEEE} Symposium on Foundations of Computer Science, {FOCS} 2013, Berkeley, CA, USA, October, 26-29, 2013}, pages 509--518. {IEEE} Computer Society, 2013.

\bibitem[DRZ20]{davies2020tale}
Sami Davies, Thomas Rothvoss, and Yihao Zhang.
\newblock A tale of santa claus, hypergraphs and matroids.
\newblock In {\em Proceedings of the Fourteenth Annual ACM-SIAM Symposium on Discrete Algorithms}, pages 2748--2757. SIAM, 2020.

\bibitem[EKS14]{EbenlendrKS14}
Tom{\'{a}}s Ebenlendr, Marek Krc{\'{a}}l, and Jir{\'{\i}} Sgall.
\newblock Graph balancing: {A} special case of scheduling unrelated parallel machines.
\newblock {\em Algorithmica}, 68(1):62--80, 2014.

\bibitem[ES99]{epstein1999approximation}
Leah Epstein and Ji{\v{r}}{\'\i} Sgall.
\newblock Approximation schemes for scheduling on uniformly related and identical parallel machines.
\newblock In {\em Algorithms-ESA’99: 7th Annual European Symposium Prague, Czech Republic, July 16--18, 1999 Proceedings 7}, pages 151--162. Springer, 1999.

\bibitem[Fei98]{Fei98}
Uriel Feige.
\newblock A threshold of ln \emph{n} for approximating set cover.
\newblock {\em J. {ACM}}, 45(4):634--652, 1998.

\bibitem[Fei04]{Fei04}
Uriel Feige.
\newblock Approximating maximum clique by removing subgraphs.
\newblock {\em {SIAM} J. Discret. Math.}, 18(2):219--225, 2004.

\bibitem[Fei08]{feige2008allocations}
Uriel Feige.
\newblock On allocations that maximize fairness.
\newblock In {\em SODA}, volume~8, pages 287--293, 2008.

\bibitem[FGMS06]{FGMS06}
Lisa Fleischer, Michel~X. Goemans, Vahab~S. Mirrokni, and Maxim Sviridenko.
\newblock Tight approximation algorithms for maximum general assignment problems.
\newblock In {\em Proceedings of the Seventeenth Annual {ACM-SIAM} Symposium on Discrete Algorithms, {SODA} 2006, Miami, Florida, USA, January 22-26, 2006}, pages 611--620. {ACM} Press, 2006.

\bibitem[FV06]{FV06}
Uriel Feige and Jan Vondr{\'{a}}k.
\newblock Approximation algorithms for allocation problems: Improving the factor of 1 - 1/e.
\newblock In {\em 47th Annual {IEEE} Symposium on Foundations of Computer Science, {FOCS} 2006, Berkeley, California, USA, October 21-24, 2006, Proceedings}, pages 667--676. {IEEE} Computer Society, 2006.

\bibitem[Gol05]{golovin2005max}
Daniel Golovin.
\newblock {\em Max-min fair allocation of indivisible goods}.
\newblock School of Computer Science, Carnegie Mellon University, 2005.

\bibitem[Gra69]{graham1969bounds}
Ronald~L. Graham.
\newblock Bounds on multiprocessing timing anomalies.
\newblock {\em SIAM journal on Applied Mathematics}, 17(2):416--429, 1969.

\bibitem[GWL{\etalchar{+}}23]{gupta2023bicriteria}
Jatinder~ND Gupta, Chin-Chia Wu, Win-Chin Lin, Xin-Gong Zhang, Danyu Bai, Bertrand~MT Lin, and Chia-Cheng Liao.
\newblock Bicriteria single-machine scheduling with multiple job classes and customer orders.
\newblock {\em Applied Soft Computing}, 147:110809, 2023.

\bibitem[Hal95]{Hal95}
Magn{\'{u}}s~M. Halld{\'{o}}rsson.
\newblock Approximating discrete collections via local improvements.
\newblock In Kenneth~L. Clarkson, editor, {\em Proceedings of the Sixth Annual {ACM-SIAM} Symposium on Discrete Algorithms, 22-24 January 1995. San Francisco, California, {USA}}, pages 160--169. {ACM/SIAM}, 1995.

\bibitem[H{\aa}s99]{haastad1999clique}
Johan H{\aa}stad.
\newblock Clique is hard to approximate within n 1- $\varepsilon$.
\newblock 1999.

\bibitem[HKT00]{halldorsson2000independent}
Magn{\'u}s~M Halld{\'o}rsson, Jan Kratochv{\i}l, and Jan~Arne Telle.
\newblock Independent sets with domination constraints.
\newblock {\em Discrete Applied Mathematics}, 99(1-3):39--54, 2000.

\bibitem[HO15]{huang2015combinatorial}
Chien-Chung Huang and Sebastian Ott.
\newblock A combinatorial approximation algorithm for graph balancing with light hyper edges.
\newblock {\em arXiv preprint arXiv:1507.07396}, 2015.

\bibitem[HS76]{horowitz1976exact}
Ellis Horowitz and Sartaj Sahni.
\newblock Exact and approximate algorithms for scheduling nonidentical processors.
\newblock {\em Journal of the ACM (JACM)}, 23(2):317--327, 1976.

\bibitem[HS89]{HS89}
Cor A.~J. Hurkens and Alexander Schrijver.
\newblock On the size of systems of sets every t of which have an sdr, with an application to the worst-case ratio of heuristics for packing problems.
\newblock {\em {SIAM} J. Discret. Math.}, 2(1):68--72, 1989.

\bibitem[HS23]{haxell2023improved}
Penny Haxell and Tibor Szab{\'o}.
\newblock Improved integrality gap in max-min allocation: or topology at the north pole.
\newblock In {\em Proceedings of the 2023 Annual ACM-SIAM Symposium on Discrete Algorithms (SODA)}, pages 2875--2897. SIAM, 2023.

\bibitem[HSS06]{HSS06}
Elad Hazan, Shmuel Safra, and Oded Schwartz.
\newblock On the complexity of approximating \emph{k}-set packing.
\newblock {\em Comput. Complex.}, 15(1):20--39, 2006.

\bibitem[HSW03]{hoogeveen2003preemptive}
Han Hoogeveen, Martin Skutella, and Gerhard~J Woeginger.
\newblock Preemptive scheduling with rejection.
\newblock {\em Mathematical Programming}, 94:361--374, 2003.

\bibitem[JP99]{jansen1999improved}
Klaus Jansen and Lorant Porkolab.
\newblock Improved approximation schemes for scheduling unrelated parallel machines.
\newblock In {\em Proceedings of the thirty-first annual ACM Symposium on Theory of Computing}, pages 408--417, 1999.

\bibitem[JR17]{jansen2017configuration}
Klaus Jansen and Lars Rohwedder.
\newblock On the configuration-lp of the restricted assignment problem.
\newblock In {\em Proceedings of the Twenty-Eighth Annual ACM-SIAM Symposium on Discrete Algorithms}, pages 2670--2678. SIAM, 2017.

\bibitem[JR19]{DBLP:conf/icalp/JansenR19}
Klaus Jansen and Lars Rohwedder.
\newblock Local search breaks 1.75 for graph balancing.
\newblock In Christel Baier, Ioannis Chatzigiannakis, Paola Flocchini, and Stefano Leonardi, editors, {\em 46th International Colloquium on Automata, Languages, and Programming, {ICALP} 2019, July 9-12, 2019, Patras, Greece}, volume 132 of {\em LIPIcs}, pages 74:1--74:14. Schloss Dagstuhl - Leibniz-Zentrum f{\"{u}}r Informatik, 2019.

\bibitem[LG23]{li2023bicriteria}
Shuguang Li and Zhichao Geng.
\newblock Bicriteria scheduling on an unbounded parallel-batch machine for minimizing makespan and maximum cost.
\newblock {\em Information Processing Letters}, 180:106343, 2023.

\bibitem[LL20]{liu2020approximation}
Xiaofei Liu and Weidong Li.
\newblock Approximation algorithm for the single machine scheduling problem with release dates and submodular rejection penalty.
\newblock {\em Mathematics}, 8(1):133, 2020.

\bibitem[LL21]{liu2021approximation}
Xiaofei Liu and Weidong Li.
\newblock Approximation algorithms for the multiprocessor scheduling with submodular penalties.
\newblock {\em Optimization Letters}, 15:2165--2180, 2021.

\bibitem[LOSZ20]{LOSZ20}
Andr{\'{e}} Linhares, Neil Olver, Chaitanya Swamy, and Rico Zenklusen.
\newblock Approximate multi-matroid intersection via iterative refinement.
\newblock {\em Math. Program.}, 183(1):397--418, 2020.

\bibitem[LST90]{DBLP:journals/mp/LenstraST90}
Jan~Karel Lenstra, David~B. Shmoys, and {\'{E}}va Tardos.
\newblock Approximation algorithms for scheduling unrelated parallel machines.
\newblock {\em Math. Program.}, 46:259--271, 1990.

\bibitem[LST24]{LST24}
Euiwoong Lee, Ola Svensson, and Theophile Thiery.
\newblock Asymptotically optimal hardness for $k$-set packing and $k$-matroid intersection, 2024.

\bibitem[LSV13]{LSV13}
Jon Lee, Maxim Sviridenko, and Jan Vondr{\'{a}}k.
\newblock Matroid matching: The power of local search.
\newblock {\em {SIAM} J. Comput.}, 42(1):357--379, 2013.

\bibitem[MW15]{mnich2015scheduling}
Matthias Mnich and Andreas Wiese.
\newblock Scheduling and fixed-parameter tractability.
\newblock {\em Mathematical Programming}, 154(1):533--562, 2015.

\bibitem[NBY06]{NBY06}
Zeev Nutov, Israel Beniaminy, and Raphael Yuster.
\newblock A (1-1/\emph{e})-approximation algorithm for the generalized assignment problem.
\newblock {\em Oper. Res. Lett.}, 34(3):283--288, 2006.

\bibitem[Neu21]{Neu21}
Meike Neuwohner.
\newblock An improved approximation algorithm for the maximum weight independent set problem in d-claw free graphs.
\newblock In Markus Bl{\"{a}}ser and Benjamin Monmege, editors, {\em 38th International Symposium on Theoretical Aspects of Computer Science, {STACS} 2021, March 16-19, 2021, Saarbr{\"{u}}cken, Germany (Virtual Conference)}, volume 187 of {\em LIPIcs}, pages 53:1--53:20. Schloss Dagstuhl - Leibniz-Zentrum f{\"{u}}r Informatik, 2021.

\bibitem[Neu23]{Neu23}
Meike Neuwohner.
\newblock Passing the limits of pure local search for weighted \emph{k}-set packing.
\newblock In Nikhil Bansal and Viswanath Nagarajan, editors, {\em Proceedings of the 2023 {ACM-SIAM} Symposium on Discrete Algorithms, {SODA} 2023, Florence, Italy, January 22-25, 2023}, pages 1090--1137. {SIAM}, 2023.

\bibitem[PWW20]{pei2020new}
Zhi Pei, Mingzhong Wan, and Ziteng Wang.
\newblock A new approximation algorithm for unrelated parallel machine scheduling with release dates.
\newblock {\em Annals of Operations Research}, 285(1):397--425, 2020.

\bibitem[PY00]{papadimitriou2000approximability}
Christos~H Papadimitriou and Mihalis Yannakakis.
\newblock On the approximability of trade-offs and optimal access of web sources.
\newblock In {\em Proceedings 41st annual symposium on foundations of computer science}, pages 86--92. IEEE, 2000.

\bibitem[SG12]{shabtay2012two}
Dvir Shabtay and Nufar Gasper.
\newblock Two-machine flow-shop scheduling with rejection.
\newblock {\em Computers \& Operations Research}, 39(5):1087--1096, 2012.

\bibitem[SGK13]{shabtay2013survey}
Dvir Shabtay, Nufar Gaspar, and Moshe Kaspi.
\newblock A survey on offline scheduling with rejection.
\newblock {\em Journal of scheduling}, 16:3--28, 2013.

\bibitem[Sve11]{svensson2011santa}
Ola Svensson.
\newblock Santa claus schedules jobs on unrelated machines.
\newblock In {\em Proceedings of the forty-third annual ACM symposium on Theory of computing}, pages 617--626, 2011.

\bibitem[TB06]{TkindtBillaut}
Vincent T'kindt and Jean-Charles Billaut.
\newblock {\em Multicriteria scheduling: theory, models and algorithms}.
\newblock Springer Science \& Business Media, 2006.

\bibitem[VW14]{verschae2014configuration}
Jos{\'e} Verschae and Andreas Wiese.
\newblock On the configuration-lp for scheduling on unrelated machines.
\newblock {\em Journal of Scheduling}, 17:371--383, 2014.

\bibitem[Woe97]{woeginger1997polynomial}
Gerhard~J Woeginger.
\newblock A polynomial-time approximation scheme for maximizing the minimum machine completion time.
\newblock {\em Operations Research Letters}, 20(4):149--154, 1997.

\bibitem[Zuc06]{zuckerman2006linear}
David Zuckerman.
\newblock Linear degree extractors and the inapproximability of max clique and chromatic number.
\newblock In {\em Proceedings of the thirty-eighth annual ACM symposium on Theory of computing}, pages 681--690, 2006.

\end{thebibliography}

\appendix

\end{document}